\newtheorem{definition}{Definition}
\newtheorem{lem}{Lemma}
\newtheorem{prop}{Proposition}
\newtheorem{theorem}{Theorem}
\newtheorem{comment}{Comment}
\def\eqdef{\ensuremath{:=}}
\def\predHorzLen{\ensuremath{N_p^-}}
\newcommand{\version}{arxiv}
\begin{document}
\bstctlcite{IEEEexample:BSTcontrol} %needed so repeated authors aren't ---
\title{\vspace{6.4mm} Predictive resource allocation for flexible loads with local QoS}
\author{
	\IEEEauthorblockN{Austin R. Coffman\IEEEauthorrefmark{1}$^,$\IEEEauthorrefmark{2}, Matthew Hale\IEEEauthorrefmark{1}, and Prabir Barooah\IEEEauthorrefmark{1}}  \vspace{-0.5cm}
	\thanks{\IEEEauthorrefmark{2} corresponding author, email: bubbaroney@ufl.edu.}
	\thanks{\IEEEauthorrefmark{1} AC, MH, and PB are with the Dept. of Mechanical and Aerospace Engineering, University of Florida, Gainesville, FL 32601, USA. AC and PB are partially supported by the NSF through award 1646229 (CPS-ECCS). MH was supported by ONR under grant N00014-19-1-2543 and by a task order from the Munitions Directorate of AFRL.}
	
}
\maketitle
\thispagestyle{empty}
\begin{abstract}
Loads that can vary their power consumption without violating their Quality of service (QoS), that is flexible loads, are an invaluable resource for grid operators. Utilizing flexible loads as a resource requires the grid operator to incorporate them into a resource allocation problem. Since flexible loads are often consumers, for concerns of privacy it is desirable for this problem to have a distributed implementation. Technically, this distributed implementation manifests itself as a time varying convex optimization problem constrained by the QoS of each load. In the literature, a time invariant form of this problem without all of the necessary QoS metrics for the flexible loads is often considered. Moving to a more realistic setup introduces additional technical challenges, due to the problems' time-varying nature. In this work, we develop an algorithm to account for the challenges introduced when considering a time varying setup with appropriate QoS metrics.

%In a centralized setting, for a grid operator to incorporate flexible loads into a resource allocation problem, the \emph{capacity} of the collection is required. To know this, the grid operator requires extensive knowledge about the loads and the resulting capacity estimate is highly dependent on the grid operators information being correct. This process is thus highly invasive and non-robust modeling errors.

%Rather, we focus on a distributed solution, where only each load needs to know its own QoS parameters and coordination is achieved through communication. This distributed framework is considered often in the literature. However, often the loads QoS are not fully modeled; static and box constrains are only considered. Incorporating the constraints that better represent the QoS of the loads requires    
\end{abstract}
\section{Introduction}

Relying more and more on renewable generation is the envisioned future for the power grid. However, this goal is not without its challenges; renewable sources, such as solar and wind, are highly volatile. Moreover, supply and demand of power must always be in equilibrium, and when renewable generation cannot ensure this, controllable generation sources must ramp to ensure equilibrium. Economically, for a Balancing Authority (BA) (the institution responsible for ensuring supply and demand are balanced in a given geographical area), ramping generators or utilizing batteries for this is not feasible. This has motivated the recent investigation of a new resource to help where conventional generators and batteries fall short: flexible loads. 

%An envisioned future for the power grid is one that relies more on renewable generation sources. An inevitable challenge in this scenario is the inherent variability present in renewable generation sources, such as solar or wind. This variability requires grid operators to ramp controllable resources up and down to meet the demand when renewable generation does not. Ramp rate constraints prevent conventional generation from handling this mismatch completely. Grid level storage from batteries is expensive. Thus a new resource is being investigated to help fill the mismatch where conventional generators and batteries fall short: flexible loads.

Flexible loads can deviate from a baseline level of consumption without violating the Quality of Service (QoS) of the load. From the perspective of the BA, flexible loads deviating from baseline are identical to a battery discharging and charging. Due to this, flexible loads are often said to provide ``Virtual Energy Storage'' (VES)~\cite{bar:2019}. More importantly, grid support from flexible loads is more cost effective than batteries~\cite{cammardella2018energy}. Some examples of flexible loads include residential air conditioners~\cite{CoffmanVESBuildSys:2018}, water heaters~\cite{liu2019trajectory}, refrigerators~\cite{mathias2016smart}, commercial HVAC systems~\cite{haokowlinbarmey:2013}, and pumps for irrigation~\cite{AghajFarm:2019} and pool cleaning~\cite{chenDistributedIMA:2017}.

%Flexible loads have the ability to vary power consumption over a baseline level without violating the Quality of Service (QoS) of the load. The baseline power consumption is the power consumed without grid interference. The requested amount, from the grid authority, to deviate from baseline is termed the \emph{reference signal}. The tracking of the reference signal guises, in the eyes of the grid operator, flexible loads as batteries providing storage services. 

To utilize flexible loads, the BA in some way must incorporate them into a resource allocation problem. In a centralized framework, the resource allocation problem involves a central authority accounting for all of its resources and their constraints, and then allocating its needs to each resource based on the constraints. The problem is typically solved for a specific future duration. For instance, the BA allocates its resources for the next day~\cite{cammerdellaCDC:2018}.  

In contrast to the BA solving a centralized resource allocation problem, it is possible to decentralize and have each flexible load solve a portion of the centralized problem. Furthermore, this distributed algorithm can run in real time. The advantage of a distributed solution is that (i) privacy is protected, as each load only needs to know its own QoS and (ii) the solution is more robust to modeling error as no one entity is making decisions for the ensemble based on models of the ensemble; each member of the ensemble makes decisions for itself based on a combination of its local and global information. 

Solving the resource allocation problem in a distributed fashion and real time falls under the framework of \emph{time-varying optimization}. There are two main challenges in this framework: (C1) shifting to a real time solution is problematic for constraints with ``memory'', e.g. dynamic systems or rate constraints that require past state values to evaluate, and (C2) at each instant in time typically only one iteration of the optimization algorithm can be applied. While the effects of point (C2) are indirectly/directly analyzed in virtually all works on real time optimization, point (C1) is often not considered. That is, most works on time varying optimization focus only on static constraint maps~\cite{TAC_Sun:2017} or unconstrained problems~\cite{simonetto2016class}. Unfortunately, the QoS of flexible loads is specified by constraints with memory.

In addition to the literature on \emph{time-varying optimization}, there is a subfield of literature focused on the distributed resource allocation for flexible loads~\cite{BrooksDecentralizedTSG:2019,brobar:ACC:2016,Hug_distOpt:2015TSG,zhatoplilow:2014,CherCortes:2018TAC,baiSunFengHu:CDC2018,zhao2013optimal,doan2017distributed} in the smart grid. While there is a library of work~\cite{coffman2019aggregate:arxivACC,CoffmanCharacterizingArxiv:2020,hao_aggregate:2015} on how to model the QoS of flexible loads for the purpose of resource allocation, only a few works on distributed resource allocation take this into account~\cite{BrooksDecentralizedTSG:2019,CherCortes:2018TAC}.

To summarize, much of the past work on time-varying optimization is focused on problems of different structure than the resource allocation problem for flexible loads. Thus the algorithms developed are not directly applicable. Whereas, many of the past works focused on distributed resource allocation for flexible loads do not account for the entirety of the loads' QoS.
%Consequently, the capacity is a function of every loads QoS and the solution of the resource allocation problem is highly dependent on the BA having accurate knowledge of each loads QoS. Furthermore, having the grid operator perform a centralized resource allocation method is non-robust to modeling errors.

%One key challenge for the BA is determining how the QoS of each individual flexible load limits the ability of the ensemble to track a reference signal. Or inversely, how a BA can allocate a reference signal so that the ensemble can track this signal without any individual violating their QoS. Many works are dedicated to this~\cite{}, and the set of constraints on the reference signal induced from the individuals QoS is termed the ``capacity'' of the collection. The determination of the capacity is inherently centralized, and requires the BA to have knowledge of each loads ``QoS parameters.'' Thus any reference signal planned with the capacity is highly dependent upon the ``QoS parameters'' being correct. 

%Instead of a centralized solution approach for resource allocation, we focus on a distributed approach. That is, instead of the central authority allocating its needs to each load, each load determines how it can contribute to the grids needs.  

In this paper, we develop an algorithm for distributed resource allocation that allows loads to account for a wide variety of QoS metrics. In doing so, our algorithm incorporates principled techniques to overcome the challenges (C1) and (C2) listed above. To overcome (C1), we employ a \emph{state augmentation} technique that augments past fictitious state values that act as surrogates for the previous states. To overcome (C2), we utilize predictions of the time varying quantities to facilitate a benefit similar to warm start technique in centralized optimization. With all features of the algorithm accounted for, we prove an Input to State stability (ISS) result for when the time varying aspect is arbitrary (but in some sense bounded). This stability result is guaranteed under gain conditions that are specified in terms of the readily available problem data.

In numerical experiments we validate our theoretical results and compare our proposed method to a past method in the literature. In the time-varying setting our proposed method is able to successfully have flexible loads solve the resource allocation problem in a distributed/hierarchical fashion. Additionally, it is shown that the past method, based on dual ascent, can lead to integrator windup in the same time-varying setting.    

The paper proceeds as follows: in Section~\ref{sec:gridNeed} the problem setup and requirements are described. In Section~\ref{sec:probSetup} the resource allocation problem is introduced, as well as past ways it has been posed as an optimization problem. In Section~\ref{sec:propMethod} our proposed method is introduced and it is analyzed in Section~\ref{sec:stability}. We give numerical examples in Section~\ref{sec:numExp} and conclude in Section~\ref{sec:conc}.
  
\section{Needs of the loads and the power grid} \label{sec:gridNeed}

\subsection{Notation}
We let $\mathbb{N}$ and $\mathbb{R}$ denote the natural and real numbers, respectively. We let the index $i \in\{1,\dots,N\}$ denote the $i^{th}$ load, where $N$ is the total number of loads. The index $t \in \mathbb{N}$ is the discrete time index. The index $t$ will only appear as a subscript, while $i$ will only appear as a superscript. In the sequel, unless specified $\|\cdot\|$ will refer to the $2$-norm of a vector on the appropriate dimension vector space. We reserve lowercase letters for vectors/scalars and uppercase letters for matrices. The notation $x[j]$, when $x$ is a vector, will refer to the $j^{th}$ element of the vector $x$.

The power consumed by load $i$ at time $t$ is denoted $d^i_{t|t}$. Furthermore, the quantity $d^i_{t+j|t}$ is the power consumption that at time $t$ load $i$ predicts it will consume at time $t+j$, where $j\leq N_p$ and $N_p$ is the prediction horizon. For convenience, we define $\predHorzLen \eqdef N_p - 1$. The required total power from all loads, i.e., the reference signal, at time $t$ is denoted $s_t$. 

We consider two ``stacked'' vectorized versions of the scalar quantities $d^i_{t+j|t}$. The first is the \emph{load perspective stacking} where we stack the scalars $d^i_{t+j|t}$ into a vector and denote it as $x^i_{t}\triangleq [d^i_{t|t},\dots,d^i_{t+\predHorzLen|t}]^T$. The second is the \emph{grid perspective stacking} where we stack over all loads, forming $x_{j|t} \triangleq [d^1_{t+j|t},\dots,d^N_{t+j|t}]^T$. In any case, for a fixed $N_p$ we refer to to the following $x_t \triangleq [(x^1_{t})^T,\dots,(x^N_{t})^T]^T$, which contains all the elements of $x^i_{t}$ and $x_{j|t}$. The purpose for introducing both stacked forms is for ease of exposition. 

%At time $t$, each load will have access to the prediction error signal (to be described shortly) over the entire prediction horizon so to make a decision about its demand deviation, also at time $t$. At the next time step, $t+1$, the desired total power deviation changes to $S_{t+1}$. 

\begin{figure}
	\centering
	\includegraphics[width=0.9\columnwidth]{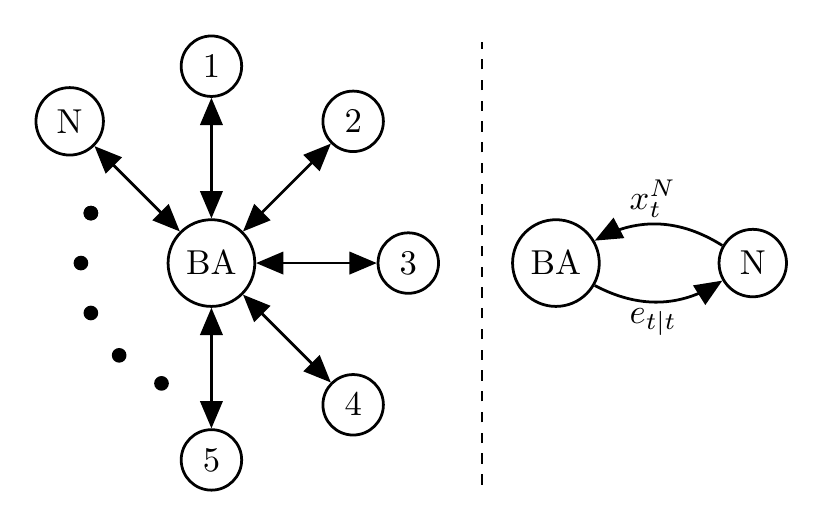}
	\caption{The information structure considered, which is representative of the structure of a utility (BA) in the USA. The numbers represent the flexible load index.}
	\label{fig:infoStruct}
\end{figure}
\subsection{BA's Needs: Reference tracking (global goal)}
The BA employs support from flexible loads to help mitigate supply and demand mismatch. Using the previously defined variables, this goal is captured by requiring the following to be small:
\begin{align} \label{eq:trackError}
&e_{\tau|t} = \sum_{i=1}^{N}d^i_{\tau|t}-s_\tau, \quad J_G(x_{\tau|t}) = e^2_{\tau|t},  \ \tau \leq t+ \predHorzLen.
\end{align}

%As seen in Figure~\ref{fig:tempDecLoad}, a single load will utilize the string of values, communicated through the grid authority, $[e_{t|t},\dots,e_{t+N_p|t}]$ to figure out how it will contribute to $S_t$.

\subsection{Individual Needs: The QoS set (local constraints)} \label{sec:QoSSet}
We describe the requirements of the loads through a QoS set. These constraints are taken from the vast literature on ``capacity characterization'' of flexible loads~\cite{coffman2019aggregate:arxivACC,CoffmanCharacterizingArxiv:2020,hao_aggregate:2015}. The constraints on the power for the $i^{th}$ load, $x^i_t$, are:
\begin{align} \label{eq:consSet}
&\mathcal{D}^i(d^i_{t-1|t-1}) \triangleq \\ \nonumber
&\bigg\{x^i_{t}: \forall \ j  \in \{t,\dots, t+\predHorzLen\}, \\  \label{eq:powerCon} 
&\textbf{Power}: \quad d^i_L \leq d^i_{j|t} \leq d^i_H, \\ \label{eq:rateCon} 
&\textbf{Rate}: \quad r^i_L \leq d^i_{j|t} - d^i_{j-1|t} \leq r^i_H, \ j > t\\ \label{eq:rateIC}
&\textbf{Rate-IC}: \quad r^i_L \leq d^i_{t|t} - d^i_{t-1|t-1} \leq r^i_H,
\\ \label{eq:engCon}
&\textbf{Energy}: \quad e^i_L \leq \sum_{j=t}^{t+\predHorzLen}d^i_{j|t} \leq e^i_H \bigg\}.
\end{align}  
Each constraint~\eqref{eq:powerCon}-\eqref{eq:engCon} has a specific meaning as illustrated by the labels given. An additional Rate-IC constraint is included to emphasize that previous data is required to evaluate this constraint. Furthermore, it is necessary to define the QoS set~\eqref{eq:consSet} over a time horizon, otherwise enforcing the constraint~\eqref{eq:engCon} would not be possible. The constraints~\eqref{eq:powerCon}-\eqref{eq:engCon} model various classes of flexible loads, e.g. batteries, HVAC systems in commercial buildings, thermostatically controlled loads (TCLs)~\cite{CoffmanCharacterizingArxiv:2020}, and pool pumps~\cite{cammerdellaCDC:2018}. 

However, while the QoS set specifies maximum limits it does not mean that it is desirable to operate at these limits. Thus, the loads are also interested in making the following quantity small,
\begin{align} \label{eq:regTerm}
&J_{L}(x_{\tau|t}) = \sum_{i=1}^{N}(d^i_{\tau|t})^2\zeta^i, \quad \tau \leq t+ \predHorzLen,
\end{align}
where $\zeta^i > 0$ for all $i \in \{1,\dots,N\}$. The quantity~\eqref{eq:regTerm} can be thought of as a regularization term.
\begin{prop}
	For each $i$ the set $\mathcal{D}^i(d^i_{t-1|t-1})$ is compact, convex, and non empty.
\end{prop}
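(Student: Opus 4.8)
The plan is to check the three properties in increasing order of difficulty: convexity and compactness are structural and essentially immediate, whereas non-emptiness is the only place where a genuine feasibility condition on the problem data is needed.

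First I would establish convexity. Viewing $x^i_t \in \mathbb{R}^{N_p}$ as the decision vector (it has $N_p$ entries $d^i_{t|t},\dots,d^i_{t+\predHorzLen|t}$), every defining relation of $\mathcal{D}^i(d^i_{t-1|t-1})$ is a pair of non-strict affine inequalities in $x^i_t$: the power bounds \eqref{eq:powerCon}, the rate bounds \eqref{eq:rateCon}, the rate-IC bound \eqref{eq:rateIC}, and the energy bounds \eqref{eq:engCon}. Note that in \eqref{eq:rateIC} the previous value $d^i_{t-1|t-1}$ enters only as a constant offset, so that relation is still affine in $x^i_t$. Each inequality therefore defines a closed half-space, which is convex, and $\mathcal{D}^i(d^i_{t-1|t-1})$ is a finite intersection of such half-spaces; since intersections of convex sets are convex, convexity follows.

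Next I would establish compactness via Heine--Borel in $\mathbb{R}^{N_p}$. Closedness is inherited from the finite intersection of closed half-spaces. For boundedness, the power constraint \eqref{eq:powerCon} alone confines every coordinate to $[d^i_L,d^i_H]$, so $\mathcal{D}^i(d^i_{t-1|t-1}) \subseteq [d^i_L,d^i_H]^{N_p}$, which is a bounded box. Closed and bounded in finite dimension yields compact.

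The main obstacle is non-emptiness, and I would handle it by separating the energy constraint from the rest. Let $P \subseteq \mathbb{R}^{N_p}$ denote the polytope cut out by only the power, rate, and rate-IC constraints. Under the natural standing assumptions on the data — that the previously applied command was itself admissible, $d^i_L \le d^i_{t-1|t-1} \le d^i_H$, and that holding power steady is allowed, $r^i_L \le 0 \le r^i_H$ — the constant trajectory $d^i_{j|t} = d^i_{t-1|t-1}$ for all $j$ lies in $P$, so $P$ is non-empty (and compact, convex, by the arguments above). The energy expression $\sum_{j} d^i_{j|t}$ is a continuous linear functional, so its image over the compact, connected set $P$ is a closed interval $[\underline{E}^i,\overline{E}^i]$ with endpoints $\underline{E}^i = \min_{P}\sum_{j} d^i_{j|t}$ and $\overline{E}^i = \max_{P}\sum_{j} d^i_{j|t}$. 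Consequently $\mathcal{D}^i(d^i_{t-1|t-1})$ is non-empty exactly when the energy band meets this interval, i.e.\ $[e^i_L,e^i_H] \cap [\underline{E}^i,\overline{E}^i] \neq \emptyset$; any $x^i_t \in P$ attaining a value in the overlap is a feasible point. This compatibility between the energy limits and the range of energies reachable under the power and rate bounds is the substantive content of the claim, and is precisely what the implicit feasibility assumption on the parameters $(d^i_L,d^i_H,r^i_L,r^i_H,e^i_L,e^i_H)$ must guarantee. I would therefore state that data assumption explicitly and, in the generic case, exhibit the witness directly — either the constant trajectory when $N_p\,d^i_{t-1|t-1} \in [e^i_L,e^i_H]$, or a short linear ramp that stays inside the rate band and steers the running sum into $[e^i_L,e^i_H]$.
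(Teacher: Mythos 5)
Your proof is correct, and in fact there is nothing in the paper to compare it against: the authors state this proposition bare, with no proof anywhere in the body or appendix. Your convexity and compactness arguments are exactly the standard ones the authors presumably had in mind — every relation \eqref{eq:powerCon}--\eqref{eq:engCon} is a pair of non-strict affine inequalities in $x^i_t \in \mathbb{R}^{N_p}$ (with $d^i_{t-1|t-1}$ a constant offset in \eqref{eq:rateIC}), so the set is a finite intersection of closed half-spaces, hence closed and convex, and the power bounds confine it to the box $[d^i_L,d^i_H]^{N_p}$, so Heine--Borel gives compactness. Where your write-up genuinely adds something is non-emptiness: as stated, the proposition is not unconditionally true — take $e^i_H < N_p\, d^i_L$ and the set is empty for every value of $d^i_{t-1|t-1}$ — so some compatibility hypothesis on the data $(d^i_L,d^i_H,r^i_L,r^i_H,e^i_L,e^i_H)$ and on the previous state is silently assumed by the paper, and you correctly isolate the exact condition: writing $P$ for the polytope cut out by the power, rate, and rate-IC constraints, feasibility holds if and only if $[e^i_L,e^i_H]$ meets the interval $[\underline{E}^i,\overline{E}^i]$ of energies achievable over $P$, the equivalence being valid because the linear energy functional maps the compact convex (hence connected) $P$ onto that entire interval, so any point of the overlap has a preimage in $P$ — your explicit ramp construction is a nice concrete witness but is not logically needed. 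Your sufficient hypotheses are also the natural ones: $r^i_L \le 0 \le r^i_H$ is a sanity condition on the rate band, and $d^i_L \le d^i_{t-1|t-1} \le d^i_H$ holds recursively along the algorithm's own trajectory, since the implemented power at time $t-1$ is projected onto a set enforcing \eqref{eq:powerCon}. In short: your structural arguments match the (implicit) intended proof, and your non-emptiness analysis repairs an actual gap in the paper by making the standing feasibility assumption explicit.
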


There are two important points about the set $\mathcal{D}^i(d^i_{t-1|t-1})$: (i) the constraints~\eqref{eq:rateCon}-\eqref{eq:engCon} require more than one instant of time to appropriately evaluate, and (ii) the constraint~\eqref{eq:rateIC} has memory, at time $t$ the set $\mathcal{D}^i(d^i_{t-1|t-1})$ is a function of $d^i_{t-1|t-1}$.   
  
\begin{comment}
The constraint set~\eqref{eq:consSet} in its abstract form captures the heterogeneity of the load. In fact, other than it being convex, compact, and each load having an independent constraint set, we require no more assumptions for this set. For example, load $i=10$ could be a Walmart and load $i=5$ could be a classroom on a university campus, both shifting their load to help the grid. Put explicitly, our proposed optimization problem and solution method tolerate arbitrary high degrees of heterogeneity.
\end{comment}
  
\subsection{Information structure}
The information structure considered is depicted in Figure~\ref{fig:infoStruct}, which is a hierarchical communication structure with distributed computation. For each time $t$, the loads are allowed to communicate exactly \emph{once} to the BA in order to receive global information, the signal $e_{t|t}$~\eqref{eq:trackError}. The loads can then use this global information to apply \emph{one} iteration of an optimization algorithm to achieve the global goal, tracking the reference $s_t$. However, at the next time, $t+1$, the reference $s_t$ will change and hence the optimization problem the loads are attempting to solve is operating in ``real time.''  
  
\section{Resource allocation and optimization basics} \label{sec:probSetup}

%The QoS methodology in section~\ref{sec:indQoS} culminated in a discrete time constraint on the PSD of each loads power consumption. This constraint will be utilized in the following, and as such the analysis will solely take place in discrete time.
%The resource allocation problem is a way to jointly specify all the goals described in Section~\ref{sec:gridNeed}. Additionally, due to the structure of the resource allocation problem, it can readily be solved in a distributed fashion. 

The goal of the resource allocation problem is to set up one problem that combines both the grid's and individual needs, as specified in Section~\ref{sec:gridNeed}. Additionally, we seek a distributed and real time solution to the resource allocation problem. As stated in the introduction, the combination of the requirements in Section~\ref{sec:gridNeed} with a real time and distributed implementation is often not considered. 

To better understand our contribution we review resource allocation problems considered in past literature, and comment on how these methods lead to challenges when faced with the more realistic problem specifications here. However, before any of this we review how to solve a constrained optimization problem, of special structure, in a distributed fashion using projected gradient descent. 

%To better understand our contribution, we (i) describe how to solve a constrained optimization problem, of special structure, in a distributed fashion, and (ii) list some typical resource allocation problems that have been considered in the past literature. 
\subsection{Solving a constrained optimization problem}
A distributed algorithm for solving the following time varying structured convex problem,
\begin{align} \label{prob:refOptProb}
	\min_{z\in\mathcal{Z}}\ f(z;t), \quad z \in \mathbb{R}^q, \quad \mathcal{Z} = \mathcal{Z}^1 \times \dots \times \mathcal{Z}^q,
\end{align} 
with $z^i \in \mathcal{Z}^i$ only,  is the so-called projected gradient descent method,
 \begin{align} \label{alg:expAlg}
	 z^i_{t+1} &= \Pi_{\mathcal{Z}^i}\bigg(z^i_{t}-\alpha\nabla f^i(z_t)\bigg), \quad \forall i \in \{1,\dots,q\}, \\ \nonumber
	 \nabla f^i(z_t) &\triangleq \frac{\partial f(z;t)}{\partial z^i}\bigg\vert_{z = z_t}, \quad \Pi_{\mathcal{X}}(x) \triangleq \arg\min_{y\in\mathcal{X}}\|y-x\|,
 \end{align}
with $\alpha>0$ a step size. The projected gradient method applied to time invariant problems has its origins in~\cite{goldstein1964}. For an introduction to time varying convex optimization the paper~\cite{Popkov2005} is a good reference. As we will see, the resource allocation problem naturally has a similar structure to~\eqref{prob:refOptProb}. 

\ifx 0
\subsection{Example Resource allocation 1}
One way that the resource allocation problem can be posed is through the following optimization problem,
\begin{align} \label{prob:compProb}
\nu^*_t  = &\min_{D_{t}}\ \frac{1}{2}\bigg(J_1(D_{t|t}) + J_2(D_{t|t}) \bigg) \\ \label{prob:compProbCon1}
&\text{s.t.} \quad D^i_{|t} \in [d^i_L,d^i_H], \quad \forall i \in \{1,\dots,N\}, 
\end{align}
where the prediction horizon $N_p = 0$. We see the structure of problem~\eqref{prob:compProb} exactly fits the structure of~\eqref{prob:refOptProb} so we can solve~\eqref{prob:compProb} with
\begin{align} \label{eq:exPrimSol}
D^i_{|t+1} = \Pi_{[d^i_L,d^i_H]}\bigg(D^i_{|t} - \alpha\nabla\nu(D^i_{|t})\bigg), \ \forall i \in\{1,\dots,N\},
\end{align} 
where $\alpha$ is the stepsize. However, having the loads implement~\eqref{eq:exPrimSol} is likely to lead to QoS violation, since many of the QoS metrics are not considered in this method.
\fi
\subsection{Example resource allocation 1: Dual ascent} \label{sec:dualMethod}
A commonly encountered resource allocation problem~\cite{zhatoplow:2012} is,
\begin{align} \label{prob:compProb2}
\beta^*_t  = &\min_{x_t} \ \beta(x_t) = \frac{1}{2}\bigg(J_{L}(x_{t|t}) + J_G(x_{t|t})\bigg) \\ 
&\text{s.t.} \quad d^i_{t|t} \in [d^i_L,d^i_H], \quad \forall i \in\{1,\dots,N\}, \\ \label{prob:compProb2End}
&e_{t|t} = 0 \ \leftrightarrow \ \lambda_t,
\end{align}
where $\leftrightarrow$ refers to the association of the dual variable $\lambda_t$ (i.e., the Lagrange multiplier). In this setting the prediction horizon, $N_p$, is zero making the decision variable for load $i$ only $d^i_{t|t}$. Given feasibility and strong convexity, one can solve this problem in a distributed/hierarchical fashion with the so called ``dual ascent'' method,
\begin{align} 
\label{eq:unstableInt}
\lambda_{t} &= \lambda_{t-1} + \gamma e_{t-1|t-1}, \\
d^i_{t|t} &= \Pi_{[d^i_L,d^I_H]}\big(\frac{\lambda_t}{\zeta^i}\big),
\end{align}
with $\gamma$ a stepsize. The general derivation of these equations can be found in most introductory optimization textbooks~\cite{bertsekas1989parallel}. 

Immediately, we see that this method will have some problems. Firstly, the resource allocation~\eqref{prob:compProb2}-\eqref{prob:compProb2End} do not account for all of the constraints in the QoS set. Secondly, if $e_{t|t}$ cannot be made small (ideally zero) then the solution method~\eqref{eq:unstableInt} will suffer from the so called ``integrator windup'' phenomenon. For a time-invariant optimization problem, under the appropriate assumptions, it is straightforward to ensure zero steady state error, i.e., $e_{t|t}\rightarrow 0$. However, when the optimization problem is non-stationary it may be possible that for some time the problem is feasible and for other periods of time the problem is not feasible. When the problem is non-feasible the dual update equation~\eqref{eq:unstableInt} will continue to integrate non-zero error. When the problem becomes feasible again, the Lagrange multiplier will be far away from the optimal Lagrange multiplier for the newly feasible problem. Thirdly, knowing that the resource allocation~\eqref{prob:compProb2} will be feasible for all time is centralized knowledge, or requires the solution to a centralized optimization problem. 

\subsection{Example resource allocation 2: predictive resource allocation} \label{sec:predResAlloc}
Another resource allocation formulation is the predictive resource allocation problem, which is described by the following optimization problem at time $t\in\mathbb{N}$:
\begin{align}  
\label{prob:predResAlloc}
\kappa^*_t = &\min_{x_t} \ \kappa(x_t) = \frac{1}{2}\bigg(\sum_{\tau=t}^{t+\predHorzLen} J_{L}(x_{\tau|t})+J_G(x_{\tau|t})\bigg) \\ \label{eq:propBoundPO}
&\text{s.t.} \quad x^i_{t} \in \mathcal{D}^i(d^i_{t-1|t-1}), \quad \forall \ i\in\{1,\dots,N\},
\end{align}
with $\predHorzLen > 0$.
This formulation allows for the incorporation of an appropriate QoS set~\eqref{eq:consSet}, however the constraint set is time varying and state dependent. So, while this problem may appear to be in the form amendable for the algorithm~\eqref{alg:expAlg}, this is not the case. As the algorithm~\eqref{alg:expAlg} requires a fixed constraint set, and the constraint set~\eqref{eq:propBoundPO} is not fixed. So as it stands, there is no clear way to specify a distributed algorithm to solve~\eqref{prob:predResAlloc}.

This problem is considered in~\cite{Hug_distOpt:2015TSG}, however the focus there is not a real time implementation. As a result, the challenges we face here were not present in~\cite{Hug_distOpt:2015TSG}.

\emph{The formulation of a resource allocation problem with all the appropriate QoS constraints, such as \eqref{prob:predResAlloc}-\eqref{eq:propBoundPO} -  and an algorithm for its solution are the focus of the rest of the paper.}
\section{Proposed Method} \label{sec:propMethod}
Largely, the limitation of the past resource allocation problem is that they do not consider appropriate load QoS metrics (dual ascent resource allocation, Section~\ref{sec:dualMethod}). Further, we see that when including the appropriate metrics the constraint set becomes time varying and state dependent (predictive resource allocation, Section~\ref{sec:predResAlloc}). So that if we wish to use the appropriate QoS set, modifications to the resource allocation must be done to make the set fixed. We handle this limitation with a \emph{state augmentation} technique, which we describe next.

\subsection{Predictive Resource Allocation with memory}
We define the \emph{memory} objective at time $t\in\mathbb{N}$ as follows: $J_{\text{m}}(x_{t-1|t})\triangleq$
\begin{align}
	  \sum_{i=1}^{N}(d^i_{t-1|t} - d^i_{t-1|t-1})^2\bar{\zeta}^i + \bigg(\sum_{i=1}^{N}d^i_{t-1|t} - s_{t-1}\bigg)^2.
\end{align}
We have introduced the variable $d^i_{t-1|t}$,  which is a fictitious variable at time $t$ that we desire to be close to $d^i_{t-1|t-1}$ (treated as a constant at time $t$), where close is defined by $(d^i_{t-1|t} - d^i_{t-1|t-1})^2$. The augmented decision variable, $z_t$, containing $d^i_{t-1|t}$ is then:
\begin{align}
	z^i_{t} &\triangleq [d^i_{t-1|t}, (x^i_{t})^T]^T, \\
	z_t &\triangleq [(z^1_{t})^T,\dots,(z^N_{t})^T]^T,
\end{align}
where, by construction, $z_t$ contains all the elements in  $x_{j|t}$, so where convenient we refer to $x_{j|t}$ however, within the scope of an optimization problem, the relevant decision variable is $z_t$. With $z^i_{t}$ it is now possible to redefine the QoS set~\eqref{eq:consSet} as independent of the previous state value. We denote this new set as:
\begin{align} \label{eq:timeInvSet}
\mathcal{D}^i \triangleq &\bigg\{z^i_{t}: %\forall \ j  \in \{t-1,\dots, t+N_p\}, \\ 
\ \text{s.t.}~\eqref{eq:powerCon}, \eqref{eq:rateCon}, \text{and} \  \eqref{eq:engCon}\bigg\}.
\end{align}

\begin{comment}
	In~\eqref{eq:timeInvSet} the constraint~\eqref{eq:rateIC} is evaluated with the decision variable $d^i_{t-1|t}$ and not an externally specified variable/parameter. Hence, there is no need to distinguish between the rate and rate-IC constraint.
\end{comment}
With this, the predictive resource allocation problem with memory is the following: 
\begin{align}
\nonumber
&\min_{z_t} \ \eta(z_t)=\frac{1}{2}\bigg(\sum_{\tau=t}^{t+\predHorzLen} J_{L}(x_{\tau|t})+J_{G}(x_{\tau|t}) + J_{\text{m}}(x_{t-1|t})\bigg)\\ \label{prob:predMemResAlloc}
&\text{s.t.} \quad z^i_{t} \in \mathcal{D}^i, \quad \forall \ i\in\{1,\dots,N\}.
\end{align}
We see that~\eqref{prob:predMemResAlloc} is in a form applicable to the example algorithm~\eqref{alg:expAlg}. The solution to~\eqref{prob:predMemResAlloc} is denoted $z^*_t$ with optimal value $\eta^*_t = \eta(z^*_t)$. 
\ifx 0
\section{DRA: Dual algorithm}

Consider the DRA optimization problem at time $t \in \mathbb{N}$ specified by~\eqref{prob:dualOnly}. We will solve this problem with a dual method, which as we shall see is distributed in nature. A weakness of this approach is that feasibility of~\eqref{prob:dualOnly} is required for the convergence results presented.

\subsection{Distributed Dual Algorithm}
Since the problem~\eqref{prob:dualOnly} is convex and satisfies Slater's constraint qualification, there is zero duality gap. That is, the optimal value of the dual problem and the primal problem are the same. The dual problem is characterized by the dual function
\begin{align} \label{eq:dualFunc}
	&\phi^*(\lambda) = \min_{D^i\in\mathcal{D}^i} \mathcal{L}(D,\lambda) = \sum_{i=1}^{N}\frac{\zeta^i}{2}\left(D^i\right)^TD^i + \lambda(S_t-D^i_t),
\end{align}
where $\mathcal{L}(D,\lambda)$ is the Lagrangian. 
The dual function $\phi^*(\lambda)$ is strictly concave, and the dual problem is
\begin{align}\label{eq:dualProb}
	\lambda^* = \arg\max_{\lambda} \ \phi^*(\lambda).
\end{align}
The primal variable is recovered from $\lambda^*$ by solving for all $i\in\{1,\dots,N\}$,
\begin{align} \label{eq:primSubProb}
	D^{i,*}(\lambda^*) = \arg\min_{D^i \in\mathcal{D}^i} \frac{\zeta^i}{2}\left(D^i\right)^TD^i - \lambda^*D^i_t.
\end{align}
To obtain $\lambda^*$ we solve~\eqref{eq:dualProb} with the so called ``dual ascent''~\cite{} method:
\begin{align} \label{eq:dualIter}
	\lambda_{t+1} = \lambda_t + \gamma \nabla\phi^*(\lambda_t).
\end{align}
The iteration~\eqref{eq:dualIter} is distributed as,
\begin{align} \nonumber
\nabla\phi^*(\lambda_t) &= S_t - \sum_{i=1}^{N}D^{i,*}(\lambda_t),
\end{align}
The gradient can be computed locally, see~\cite{zhatoplow:2013} for further discussions on this. Practically, in applications where local computation is not possible the above quantity can be shared through an undirected star graph, and is thus still distributed. Furthermore, the primal sub problem~\eqref{eq:primSubProb} separates over $i$, and can thus be solved by each agent individually, given it had access to $\lambda_t$.

We now present a lemma that will be useful in developing stability results for the distributed resource allocation problem.

\begin{lem} \label{lem:lipCont}
	The function $\nabla\phi^*(\lambda)$ is Lipschitz on $\mathbb{R}$ with constant $L=\sum_{i=1}^{N}\frac{1}{\zeta^i}$.
\end{lem}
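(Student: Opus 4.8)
The plan is to use the strong convexity of each load's primal subproblem~\eqref{eq:primSubProb} to bound the displacement of its minimizer as $\lambda$ varies, and then to aggregate these per-load bounds with the triangle inequality. First I would justify the differentiability of $\phi^*$ together with the gradient expression stated after~\eqref{eq:dualIter}. For each fixed $\lambda$ the subproblem objective $h^i_\lambda(D) \eqdef \frac{\zeta^i}{2}\|D\|^2 - \lambda D_t$ is $\zeta^i$-strongly convex, and it is minimized over the compact, convex, nonempty set $\mathcal{D}^i$; hence the minimizer $D^{i,*}(\lambda)$ is unique for every $\lambda$. Since $\mathcal{L}(D,\lambda)$ in~\eqref{eq:dualFunc} is affine in $\lambda$ with a unique inner minimizer, Danskin's theorem applies and gives $\nabla\phi^*(\lambda) = S_t - \sum_{i=1}^N D^{i,*}_t(\lambda)$, where $D^{i,*}_t(\lambda)$ denotes the time-$t$ component of $D^{i,*}(\lambda)$, the only component coupled to $\lambda$.

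The main step is to show that each map $\lambda \mapsto D^{i,*}_t(\lambda)$ is Lipschitz with constant $1/\zeta^i$. Fix $\lambda_1,\lambda_2 \in \mathbb{R}$ and write $D_1 \eqdef D^{i,*}(\lambda_1)$ and $D_2 \eqdef D^{i,*}(\lambda_2)$. Because $D_1$ and $D_2$ minimize $h^i_{\lambda_1}$ and $h^i_{\lambda_2}$ over the convex set $\mathcal{D}^i$, the first-order optimality (variational inequality) conditions give $\langle \zeta^i D_1 - \lambda_1\mathbf{1}_t,\, D_2 - D_1\rangle \ge 0$ and $\langle \zeta^i D_2 - \lambda_2\mathbf{1}_t,\, D_1 - D_2\rangle \ge 0$, where $\mathbf{1}_t$ is the standard basis vector selecting the time-$t$ entry. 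Adding these and rearranging yields $\zeta^i\|D_1 - D_2\|^2 \le (\lambda_1-\lambda_2)(D_{1,t}-D_{2,t})$. Bounding the left-hand side below by $\zeta^i(D_{1,t}-D_{2,t})^2$ and the right-hand side above by $|\lambda_1-\lambda_2|\,|D_{1,t}-D_{2,t}|$ and cancelling $|D_{1,t}-D_{2,t}|$ gives $|D^{i,*}_t(\lambda_1)-D^{i,*}_t(\lambda_2)| \le \frac{1}{\zeta^i}|\lambda_1-\lambda_2|$.

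Finally I would aggregate: using the gradient formula from the first step, $|\nabla\phi^*(\lambda_1)-\nabla\phi^*(\lambda_2)| = \big|\sum_{i=1}^N \big(D^{i,*}_t(\lambda_2)-D^{i,*}_t(\lambda_1)\big)\big| \le \sum_{i=1}^N \frac{1}{\zeta^i}|\lambda_1-\lambda_2|$, which is exactly the claimed constant $L = \sum_{i=1}^N 1/\zeta^i$. I expect the middle step to be the main obstacle: one must correctly set up the two variational inequalities and extract the component-wise bound from the strong-convexity inequality, whereas the Danskin/differentiability argument and the final summation are routine. An equivalent but more machinery-heavy route would observe that $\phi^*$ is assembled from the convex conjugates of the $\zeta^i$-strongly convex functions $\frac{\zeta^i}{2}\|\cdot\|^2 + \iota_{\mathcal{D}^i}$ (with $\iota$ the indicator), whose gradients are automatically $1/\zeta^i$-Lipschitz; I prefer the direct variational-inequality argument as it is self-contained.
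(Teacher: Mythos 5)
Your proposal is correct, and it reaches the paper's constant $L=\sum_{i=1}^N 1/\zeta^i$ by the same overall decomposition (a per-load Lipschitz bound of $1/\zeta^i$ on $\lambda\mapsto D^{i,*}(\lambda)$, aggregated by the triangle inequality), but the key per-load step is established by a genuinely different mechanism. The paper's proof is shorter: it writes the minimizer of the subproblem in closed form as a projection, $D^{i,*}(\lambda)=\Pi_{\mathcal{D}^i}(\lambda/\zeta^i)$, and invokes non-expansiveness of the projection operator; strictly speaking the completed square shows the projected point is the \emph{vector} $(\lambda/\zeta^i)e_t$ with the scalar $\lambda/\zeta^i$ in the time-$t$ slot and zeros elsewhere, so the paper's notation is a mild abuse, though the non-expansiveness bound it then applies is the same one you derive. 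Your route instead extracts the bound from first principles via the two variational inequalities at $\lambda_1$ and $\lambda_2$, using $\zeta^i$-strong convexity: adding them gives $\zeta^i\|D_1-D_2\|^2\le(\lambda_1-\lambda_2)(D_{1,t}-D_{2,t})$, and your coordinate-wise lower bound and cancellation are sound (the degenerate case $D_{1,t}=D_{2,t}$ being trivial); in fact the same inequality bounds the full vector norm, recovering exactly what non-expansiveness gives. What each approach buys: the paper's is a two-line argument once the projection formula is recognized, while yours is self-contained (non-expansiveness of projections is itself proved by the VI argument you spell out), cleanly handles the fact that $\lambda$ couples only to the time-$t$ coordinate without any notational sleight of hand, and additionally justifies the differentiability of $\phi^*$ and the gradient formula via Danskin's theorem with a unique inner minimizer over the compact set $\mathcal{D}^i$ --- a step the paper takes entirely for granted.
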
 

\begin{proof}
	We proceed in a direct manner,
	\begin{align} \nonumber
		\left|\nabla\phi^*(\lambda_1)-\nabla\phi^*(\lambda_2)\right| &= \bigg|\sum_{i=1}^{N}D^{i,*}(\lambda_1)-D^{i,*}(\lambda_2)\bigg| \\ \nonumber
		&\leq \sum_{i=1}^{N}\bigg|D^{i,*}(\lambda_1)-D^{i,*}(\lambda_2)\bigg|,
	\end{align}
	where the bound is by the triangle inequality. Now, $D^{i,*}(\lambda_1) = \Pi_{\mathcal{D}^i}(\frac{\lambda_1}{\zeta^i})$ and the projection operator, $\Pi_{\mathcal{X}}(x)$, is non-expansive~\cite{} so,
	\begin{align} \nonumber
	\sum_{i=1}^{N}\bigg|D^{i,*}(\lambda_1)-D^{i,*}(\lambda_2)\bigg| &\leq \big|\lambda_1-\lambda_2\big|\sum_{i=1}^{N}\frac{1}{\zeta^i}. 	
	\end{align}
	We have dropped the absolute value on $|1/\zeta^i|$, as this quantity is always positive.
	This completes the proof by setting, as desired, $L = \sum_{i=1}^{N}\frac{1}{\zeta^i}$.
\end{proof}

For concreteness, we re-write the distributed dual algorithm together,
\begin{align} \label{eq:dualAlg}
	\lambda_{t} &= \lambda_{t-1} + \gamma\bigg(S_{t-1} - \sum_{i=1}^{N}D^{i,*}_{t-1}\bigg), \\
	D^{i,*}_{t} &= \arg\min_{D^i \in\mathcal{D}^i} \frac{\zeta^i}{2}\left(D^i\right)^TD^i - \lambda_tD^i_{t-1}.
\end{align}

A next key step in the analysis is to show what we might expect the tracking error to be. Or in the language of optimization theory, the constraint violation of constraint~\eqref{eq:propSDMatchDO}.

\subsection{Stability Results}
We provide two stability results, based on the type of signal that $S_t$ is.

We first present the result for a step reference signal, then provide a result for a time varying reference signal. The step reference signal is a well known result in the literaure. Whereas the time varying reference signal is a new result. For the time varying reference signal, the main idea is to show that the constraint violation is uniformly ultimately bounded (UUB), and that this bound provides reasonable tracking error.

\begin{theorem}
	If the gain $\gamma$ satisfies $\gamma \in \big(0,\frac{1}{L}\big]$ and $S_t$ is a step then,
	\begin{align} \nonumber
	\lim_{t\rightarrow\infty} \left|\lambda_t-\lambda^*\right| = 0,
	\end{align}
	if $\{S_t, \ t\in\mathbb{N}\}$ is a time varying signal then,
	\begin{align} \nonumber
		\bigg|\sum_{i=1}^{N}D^{i,*}(\lambda_t) - S_{t}\bigg| < B(\lambda_0), \quad  \forall \ t \in \mathbb{N},
	\end{align}
	where $\lambda_0$ is the initial iterate of~\eqref{eq:dualAlg}.
\end{theorem}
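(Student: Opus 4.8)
The plan is to treat the two regimes of the statement separately, since the step case is essentially classical while the time-varying case is where the real difficulty lies. Throughout I would lean on three facts already in hand: the $L$-smoothness of the dual gradient (Lemma~\ref{lem:lipCont}), the stated strict concavity of $\phi^*$, and the compactness of each $\mathcal{D}^i$ (the preceding Proposition) — the last of which is what ultimately tames the non-stationary case. Recall that the dual variable $\lambda$ is scalar and that $\nabla\phi^*(\lambda)=S_t-\sum_i D^{i,*}(\lambda)$ with $D^{i,*}(\lambda)=\Pi_{\mathcal{D}^i}(\lambda/\zeta^i)$, so the quantity to be bounded in the second claim is, up to sign, exactly the gradient.

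For the step reference, $S_t\equiv S$ makes $\phi^*$ time-invariant, concave, strictly concave, and $L$-smooth, and dual ascent~\eqref{eq:dualAlg} is ordinary gradient ascent. First I would note that $\phi^*$ is coercive (the inner minimizer saturates on the compact $\mathcal{D}^i$, so the term linear in $\lambda$ dominates and $\phi^*(\lambda)\to-\infty$ as $|\lambda|\to\infty$), so under Slater/strong duality a maximizer $\lambda^*$ exists and the superlevel sets are bounded. Then the standard ascent lemma for an $L$-smooth concave function with $\gamma\le 1/L$ gives $\phi^*(\lambda_{t+1})\ge\phi^*(\lambda_t)+\tfrac{\gamma}{2}|\nabla\phi^*(\lambda_t)|^2$. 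A monotone increase bounded above by $\phi^*(\lambda^*)$ forces the increments to be summable, hence $\nabla\phi^*(\lambda_t)\to 0$; strict concavity makes $\lambda^*$ the unique stationary point, and boundedness of the iterates (they remain in a superlevel set) upgrades ``the gradient vanishes'' to $\lambda_t\to\lambda^*$.

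For the time-varying reference, two observations drive the bound. First, because each $\mathcal{D}^i$ is compact, $D^{i,*}(\lambda_t)\in\mathcal{D}^i$ for every $t$, so $\sum_i D^{i,*}(\lambda_t)$ is confined to a fixed bounded set \emph{independent} of $\lambda_t$; this already shows the violation $\sum_i D^{i,*}(\lambda_t)-S_t$ cannot blow up even when $\lambda_t$ itself drifts — the windup phenomenon lives in the dual, not in the primal/error. Second, for $\gamma\le 1/L$ the dual map $\lambda\mapsto\lambda+\gamma\nabla\phi^*(\lambda)$ is non-expansive, since its slope lies in $[1-\gamma L,1]\subseteq[0,1]$ by monotonicity of the concave scalar gradient together with Lemma~\ref{lem:lipCont}. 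I would combine these into an ISS/UUB estimate: write the error recursion, control its homogeneous part by non-expansiveness and its forcing term by compactness (and bounded $S_t$), then unroll from $\lambda_0$ to obtain a single bound $B(\lambda_0)$ valid for all $t$ — the ultimate bound plus the worst-case transient seeded by the initial iterate, which is precisely the $\lambda_0$-dependence in the statement.

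The hard part is the second claim. Unlike the stationary case there is in general no instantaneous optimal multiplier to contract toward, because the problem may be infeasible on some time intervals, and the dual map is only non-expansive rather than a strict contraction: the gradient's slope degenerates to zero once the projections saturate, which is exactly the regime that produces integrator windup. The crux is therefore to avoid any reliance on a contraction/fixed-point argument and instead extract the uniform error bound from compactness of the $\mathcal{D}^i$, while carefully accounting for how far $\lambda_t$ can travel from $\lambda_0$ before the restoring feedback reactivates. Getting this bookkeeping right so that the resulting $B(\lambda_0)$ holds for \emph{every} $t$, transient included, is where I expect the real effort to go.
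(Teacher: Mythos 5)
Your proposal is correct, but it reaches the theorem by a genuinely different route than the paper. The paper's proof works entirely through the time-varying optimal multiplier: it converts the violation into dual error via the Lipschitz bound $\big|\sum_{i}D^{i,*}(\lambda_t)-S_t\big|\leq L\left|\lambda_t-\lambda^*_t\right|$, then unrolls a recursion in which each step is asserted to contract with factor $\gamma$, and introduces mid-proof a uniform drift assumption $\left|\lambda^*_{t}-\lambda^*_{t-1}\right|<\Lambda$, arriving at $B(\lambda_0)=\frac{\Lambda}{1-\gamma}+\left|\lambda_0-\lambda^*_0\right|$; the step case is then obtained as the $\Lambda=0$ specialization rather than by your ascent-lemma argument. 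This presumes $\lambda^*_t$ exists at \emph{every} $t$, i.e.\ per-instant feasibility and dual attainment, and the $\gamma$-contraction is never justified --- indeed your skepticism is warranted, since the dual gradient's slope degenerates to zero once the projections $\Pi_{\mathcal{D}^i}(\lambda/\zeta^i)$ saturate, so no uniform contraction holds in exactly the infeasible/windup regime the theorem is meant to cover, and there the paper's argument is vacuous. Your compactness route avoids all of this: $D^{i,*}(\lambda_t)\in\mathcal{D}^i$ for every $t$, so with bounded $S_t$ the violation is bounded by $\sum_i\max_{d\in\mathcal{D}^i}\lvert d\rvert+\sup_t\lvert S_t\rvert$, a constant that trivially serves as $B(\lambda_0)$. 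One simplification you should notice: the ``bookkeeping'' you defer as the hard part is unnecessary --- the claim concerns the primal violation, not the dual iterate, so your first observation already closes it; no error recursion or non-expansiveness unrolling is needed. The trade-off is that the paper's bound, when its assumptions hold, is more informative (a transient seeded by $\left|\lambda_0-\lambda^*_0\right|$ plus a drift-driven ultimate bound), whereas yours is coarser but holds without feasibility or multiplier-drift assumptions; both implicitly require boundedness of the reference, yours through $\sup_t\lvert S_t\rvert$ and the paper's through $\Lambda$.
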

\begin{proof}
	
	At the optimal solution, i.e. $\lambda^*$, the constraint violation is zero, so we can freely add it to the non-optimal constraint violation,
	\begin{align} \nonumber
	\bigg|\sum_{i=1}^{N}D^{i,*}(\lambda_t) - S_{t}\bigg| =& \bigg|\sum_{i=1}^{N}\bigg(D^{i,*}(\lambda_t)-D^{i,*}(\lambda^*_t)\bigg) \bigg| \\ \nonumber
	\leq& L\left|\lambda_t - \lambda^*_t\right|,
	\end{align} 
	where the bound is from the Lipschitz constant obtained from Lemma~\ref{lem:lipCont}. We then have
	\begin{align} \label{eq:consViolFinal}
	\bigg|\sum_{i=1}^{N}D^{i,*}(\lambda_t) - S_{t}\bigg| \leq \bigg|\lambda_{t} - \lambda_t^*\bigg|\sum_{i=1}^{N}\frac{1}{\zeta^i}
	\end{align}
	In any case, we have that $\sum_{i=1}^{N}\frac{1}{\zeta^i} < \frac{1}{\gamma}$, so the time varying bound on the constraint violation is,
	\begin{align}
	\bigg|\sum_{i=1}^{N}D^{i,*}(\lambda_t) - S_{t}\bigg| < \frac{\left|\lambda_{t} - \lambda_t^*\right|}{\gamma}
	\end{align}
	We rewrite $\left|\lambda_{t} - \lambda_t^*\right|$ as,
	\begin{align}
	\left|\lambda_{t-1} - \lambda_t^*\right| &= \left|\lambda_{t-1} - \lambda_t^*+\lambda_{t-1}^*-\lambda_{t-1}^*\right| \\
	&\leq \left|\lambda_{t-1} - \lambda_{t-1}^*\right| + \left|\lambda^*_{t-1} - \lambda_t^*\right| \\
	&\leq \gamma\left|\lambda_{t-2} - \lambda_{t-1}^*\right| + \left|\lambda^*_{t-1} - \lambda_t^*\right|
	\end{align}
	proceeding back to $t=0$, we obtain
	\begin{align}
	\left|\lambda_{t} - \lambda_t^*\right| \leq \gamma^{t}\left|\lambda_{0} - \lambda_0^*\right| + \sum_{i=1}^{t}\gamma^{t-i}\left|\lambda^*_{i-1} - \lambda_{i}^*\right|
	\end{align}
	Now, we can take the limit $t\rightarrow\infty$ so to obtain a uniform bound, provided for all $t\in\mathbb{N}$ we have that $\left|\lambda^*_{t} - \lambda_{t-1}^*\right| < \Lambda < \infty$,
	\begin{align}
	\forall \ t\in\mathbb{N}, \quad \left|\lambda_{t} - \lambda_t^*\right| < \frac{\Lambda}{1-\gamma} + \left|\lambda_{0} - \lambda_0^*\right|
	\end{align}
	So the constraint violation is then bounded as,
	\begin{align} \nonumber
	\bigg|\sum_{i=1}^{N}D^{i,*}(\lambda_t) - S_{t}\bigg| < B(\lambda_0) = \bigg(\frac{\Lambda}{1-\gamma} + \left|\lambda_{0} - \lambda_0^*\right|\bigg)
	\end{align}
	and is thus UUB. When the reference is constant, the UUB term results to asymptotic stability.
	
	\ifx 0
	which is a countable set of finite valued points, now define
	\begin{align}
	\overline{\Lambda} \triangleq \big[\inf(\Lambda),\ \sup(\Lambda)\big],
	\end{align}
	clearly $\lambda_t \in \overline{\Lambda}$ for all $t \in \mathbb{N}$. So that the final bound is,
	\begin{align}
	\bigg|\sum_{i=1}^{N}D^{i,*}(\lambda_t) - S_{t}\bigg| < B = \frac{\alpha\Lambda^\bullet}{\gamma} , \ \text{where} \\
	\Lambda^\bullet = \sup(\Lambda) - \inf(\Lambda)
	\end{align} 
	and thus the constraint violation is UUB. 
	\fi
\end{proof}

\fi
\subsection{Proposed algorithm}

%We now pose an algorithm that can solve the predictive resource allocation problem with memory~\eqref{prob:predMemResAlloc}. The algorithm will be a modified version of~\eqref{alg:expAlg} to better account for the additional structure of the problem. 

\ifx 0
\subsection{Preliminaries}
The time varying nature of the problem makes it hard to establish formal guarantees about how well the algorithm will work. The most challenging aspect is perhaps the time varying constraint set $\mathcal{D}^i_k$. Unfortunately, this change is not arbitrary but rather state dependent, i.e., the values of the previous demand consumptions for load $i$ will affect the constraints in $\mathcal{D}^i_k$. The reason why this is a problem is in how one can recursively define the optimal trajectory for load i, $D^{i,*}_t$. The \emph{appropriate} definition is as follows
\begin{align}
	D^*_t = \Pi_{\mathcal{D}^*}\bigg(D^*_t - \alpha\nabla\eta(D^*_t)\bigg),
\end{align}
where $\mathcal{D}^*$ is the constraint set computed with the previous optimal solutions. However, it should be noted that the above does not hold if the set $\mathcal{D}^*$ is not computed with the previous optimal values, i.e., for some set $\mathcal{X} \neq \mathcal{D}^*$
\begin{align}
	D^*_t \neq \Pi_{\mathcal{X}}\bigg(D^*_t - \alpha\nabla\eta(D^*_t)\bigg),
\end{align}
which is troublesome as the typical solution method,
\begin{align}
	D_t = \Pi_{\mathcal{D}}\bigg(D_{t-1} - \alpha\nabla\eta(D_{t-1})\bigg) 
\end{align}
no longer shares the same projection set, so that the standard non-expansive property is not directly applicable, 
\begin{align} \nonumber
	\left|\left|D_t - D^*_t\right|\right| = \left|\left|\Pi_{\mathcal{D}}(\psi_t) - \Pi_{\mathcal{D}^*}(\psi_t^*)\right|\right| \nleq \left|\left|\psi_t - \psi^*_t\right|\right|.
\end{align}
In this work we overcome this through the process of \emph{state-augmentation}. More precisely, the initial conditions that make the constraint set $\mathcal{D}$ state dependent are augmented into the state of the system. These values are then ``projected'' onto the dummy variable defined in the projection operator. Thus offering a uniform constraint set over time, and between algorithm and the optimal solution.
\fi

To solve the problem~\eqref{prob:predMemResAlloc}, we propose the following algorithm. The $i^{th}$ load updates its state with:
\begin{align}
	z^i_{t+1} %&=%\Pi_{\mathcal{D}^i}\bigg(\big(1-\alpha\zeta^i\big)D^i_{t} - \alpha\bigg(\sum_{i=1}^{N}D^i_t - S_t\bigg)\bigg), \\ 
	\label{eq:distPrimAlg}
	&= \Pi_{\mathcal{D}^i}\bigg(\hat{P}\big(z^i_{t} - \alpha\nabla\eta^i(z_{t})\big)\bigg) = \Pi_{\mathcal{D}^i}\bigg(\hat{P}\psi^i_t\bigg), \\ \nonumber
	\psi^i_t &\triangleq z^i_{t} - \alpha\nabla\eta^i(z_{t}),
\end{align}
where $\alpha>0$ is a step size common to all loads,
and $\hat{P}$ is the following matrix,
\begin{align}
	\hat{P} = \begin{bmatrix}
	\mathbf{0}_{N_p\times 1}& I_{N_p} \\
	1& \mathbf{0}_{1\times N_p} \\
	\end{bmatrix}.
\end{align}
Including the matrix $\hat{P}$ will be elaborated on in section~\ref{sec:contrib}. However, its primary purpose is to ``shift'' the data to facilitate a benefit similar to warm start techniques in optimization. In fact, a flavor of this idea was included in~\cite{wang2009fast}, among others, to speed up the solution time for real time Model Predictive Control.

Recall, for each load $i$, the quantity $z^i_{t}$ is a vector in $\mathbb{R}^{N_p+1}$ where $N_p$ is the prediction horizon. The algorithm~\eqref{eq:distPrimAlg} is an update rule for the entire vector $z^i_{t}$, the value that the load $i$ \emph{actually} consumes at time $t$ is then $z^i_{t}[2] = d^i_{t|t}$. 

The ensemble dynamics, i.e., the vectorized form of the algorithm~\eqref{eq:distPrimAlg} are,
\begin{align} \label{eq:fullAlg}
	&z_{t+1}= \big[\Pi_{\mathcal{D}^1}\big(\hat{P}\psi^1_t\big), \dots, \Pi_{\mathcal{D}^N}\big(\hat{P}\psi^N_t\big)\big]^T = \Pi_{\mathcal{D}}\bigg(P\psi_t\bigg), \\
	&\mathcal{D} = \mathcal{D}^1\times\dots\times\mathcal{D}^N,  \ \psi_t = [(\psi^1_t)^T,\dots,(\psi^N_t)^T]^T, \\
	&P=I_N\otimes\hat{P},
\end{align}
where $\times$ denotes Cartesian product, $\otimes$ denotes matrix Kronecker product~\cite{horn2012matrix}, and $z_{t}$ is a vector in $\mathbb{R}^{(N_p+1)N}$. Since the Cartesian product operation preserves convexity and for each $i$ we have $\mathcal{D}^i$ is convex, the set $\mathcal{D}$ is also convex. The vectorized form~\eqref{eq:fullAlg} is useful for analysis, however during implementation each load has the ability to  update its own local variable $z^i_{t}$ by solely using~\eqref{eq:distPrimAlg}.

\begin{prop} \label{prop:fixPoint}
Let $z^*_t$ be the optimal solution to  problem~\eqref{prob:predMemResAlloc} at time $t\in\mathbb{N}$, then we have that
	\begin{align} \nonumber
	z^*_t = \Pi_{\mathcal{D}}\bigg(z^*_t - \alpha\nabla\eta(z^*_t)\bigg) &= \Pi_{\mathcal{D}}\big(\psi^*_t\big). 
	\end{align}	
	%where in the gradient we use the following $d^i_{t-1|t} = d^{i,*}_{t-1|t-1}$, i.e., the previous optimal value.
\end{prop}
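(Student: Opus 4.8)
The plan is to invoke the standard first-order optimality (variational inequality) characterization of a constrained minimizer together with the obtuse-angle characterization of the Euclidean projection onto a closed convex set, and then to observe that the two conditions coincide once the projection argument is taken to be the gradient step $\psi^*_t = z^*_t - \alpha\nabla\eta(z^*_t)$. In other words, the proposition is the classical fixed-point property of projected gradient descent for a convex problem, and I would prove it by matching these two inequalities.

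First I would record the ingredients. By Proposition~1 each $\mathcal{D}^i$ is compact, convex, and nonempty; since the Cartesian product preserves these properties (as already noted in the excerpt when forming $\mathcal{D}$), the set $\mathcal{D}$ is closed, convex, and nonempty. Hence $\Pi_{\mathcal{D}}$ is well defined and single valued. The objective $\eta$ is a finite nonnegative sum of squared affine terms in $z_t$, so it is convex and differentiable on $\mathbb{R}^{(N_p+1)N}$, which makes $\nabla\eta(z^*_t)$ meaningful and justifies the first-order conditions below.

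Second, I would write down the two characterizations side by side. Because $z^*_t$ minimizes the convex function $\eta$ over the convex set $\mathcal{D}$, the first-order optimality condition reads
\[
\langle \nabla\eta(z^*_t),\, z - z^*_t\rangle \geq 0 \quad \forall\, z\in\mathcal{D}.
\]
Separately, for any point $y$ the projection $p = \Pi_{\mathcal{D}}(y)$ is the unique point of $\mathcal{D}$ satisfying
\[
\langle y - p,\, z - p\rangle \leq 0 \quad \forall\, z\in\mathcal{D}.
\]
I would then set $y = \psi^*_t = z^*_t - \alpha\nabla\eta(z^*_t)$ and test the candidate $p = z^*_t$: since $y - p = -\alpha\nabla\eta(z^*_t)$, the projection inequality becomes $\langle -\alpha\nabla\eta(z^*_t),\, z - z^*_t\rangle \leq 0$, and dividing by $-\alpha < 0$ recovers exactly the optimality condition above. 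By uniqueness of the projection onto a convex set, this identifies $\Pi_{\mathcal{D}}(\psi^*_t) = z^*_t$, which is the claim.

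The argument is essentially routine, so I do not anticipate a genuine obstacle; the only points requiring care are (i) confirming that $\mathcal{D}$ inherits closedness and convexity from its factors so that the projection is single valued and its variational characterization applies, and (ii) keeping the sign bookkeeping correct when dividing the projection inequality by $-\alpha$. The logical step that carries the most weight is recognizing that the optimality variational inequality and the projection variational inequality are literally the same statement under the substitution $y = z^*_t - \alpha\nabla\eta(z^*_t)$, which is what collapses the proof to a one-line equivalence.
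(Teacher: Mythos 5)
Your proof is correct: the paper states Proposition~\ref{prop:fixPoint} without proof (treating it as the classical fixed-point property of projected gradient descent), and your argument --- matching the first-order variational inequality for the constrained minimizer with the obtuse-angle characterization of $\Pi_{\mathcal{D}}$ under the substitution $y = z^*_t - \alpha\nabla\eta(z^*_t)$, then invoking uniqueness of the projection --- is exactly the standard derivation the authors implicitly rely on. You also correctly supply the prerequisites the paper leaves tacit, namely that $\mathcal{D}$ is closed, convex, and nonempty as a Cartesian product of the $\mathcal{D}^i$, that $z^*_t \in \mathcal{D}$ so it is an admissible candidate for the projection, and that $\alpha > 0$ makes the sign reversal valid.
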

In the above, the set $\mathcal{D}$ is the \emph{same} set that the algorithm~\eqref{eq:fullAlg} uses during implementation. 
We will see that this facet of Proposition~\ref{prop:fixPoint} is important for the stability analysis of the proposed algorithm~\eqref{eq:fullAlg}.

\subsection{Contribution} \label{sec:contrib}
Our proposed resource allocation method and algorithm have three key contributions over the past literature: (i) we accurately account for all the QoS of the loads, (ii) the inclusion of predictions and consequently the ``horizon shifting'' matrix $\hat{P}$, and (iii) the inclusion of the term:
\begin{align} \label{eq:prevPenTerm}
J_{\text{m}}(x_{t-1|t})
\end{align}  
in the objective of~\eqref{prob:predMemResAlloc}. These improvements have been stated prior, but now with the developed math and notation they can be better exposed. The advantages of point (i) are explicit, so we focus on points (ii) and (iii). 

Elaborating on point (ii), 
multiplying the content of the projection operator by the matrix $\hat{P}$ in~\eqref{eq:distPrimAlg} is consistent with ``shifting the horizon'' of data. For instance, at time $t$ the $i^{th}$ load produces a trajectory of demand consumptions from time $t$ to time $t+\predHorzLen$. The initial condition at time $t+1$ is then the value predicted for time $t+1$ at time $t$. However, this value has already gone through at least one iteration thus speeding up the convergence in a way similar to ``warm start'' techniques in centralized optimization.

Elaborating on point (iii), we now examine what would happen when~\eqref{eq:prevPenTerm} is not included in the objective. We are concerned with the predictive resource allocation problem, as described in Section~\ref{sec:predResAlloc}. In this scenario the appropriate fixed point definition for the optimal trajectory in terms of the algorithm is now
\begin{align} \label{eq:predOptSol}
x^*_t &= \Pi_{\mathcal{D}^*_t}\bigg(x^*_t - \alpha\nabla\kappa(x^*_t)\bigg) = \Pi_{\mathcal{D}^*_t}\bigg(\Psi^*_t\bigg), \\ 
\mathcal{D}^*_t &\triangleq \mathcal{D}^1(d^{1,*}_{t-1|t-1})\times\dots\times\mathcal{D}^N(d^{N,*}_{t-1|t-1}),
\end{align}
thus $\mathcal{D}^*_t$ is the constraint set computed with the previous optimal values, and $\Psi^*_t$ is the content of the projection operator in~\eqref{eq:predOptSol}.
Suppose now we attempt to use the prototype algorithm~\eqref{alg:expAlg} to solve~\eqref{prob:predResAlloc}, i.e.,
\begin{align} \nonumber
	x_{t+1} &= \Pi_{\mathcal{D}_{t}}\bigg(x_{t} - \alpha\nabla\kappa(x_{t})\bigg) = \Pi_{\mathcal{D}_{t}}(\Psi_t), \\ \nonumber
	\mathcal{D}_t &\triangleq \mathcal{D}^1(d^{1}_{t-1|t-1})\times\dots\times\mathcal{D}^N(d^{N}_{t-1|t-1}).
\end{align} 
Typically, one is interested in bounding $\|x_t-x^*_t\|$, which is usually performed using the non-expansive property of the projection operator. However since we have that $\mathcal{D}_t \neq \mathcal{D}^*_t$, we see that,
\begin{align} \nonumber
\|x_t - x^*_t\| = \|\Pi_{\mathcal{D}_t}(\Psi_t) - \Pi_{\mathcal{D}^*_t}(\Psi_t^*)\| \nleq \|\Psi_t - \Psi^*_t\|,
\end{align}
since the non-expansive property requires that $\mathcal{D}_t = \mathcal{D}^*_t$. Thus showing convergence or boundedness will be greatly complicated, and likely lead to a lackluster bound.

\ifx 0
Our proposed algorithm~\eqref{eq:fullAlg} is inherently different from most of the optimization based algorithms for distributed control in the literature. These algorithms directly take the previous iterate and map it to the initial condition for the next iterate. If predictions of the signal $S_t$ are available, not shifting the horizon is disregarding important information that would otherwise allow the algorithms to converge faster. In fact, there is work on online algorithms to speed up model predictive control (MPC) using, among other things, a similar idea to this~\cite{wang2009fast}.  In MPC converging faster allows one to operate at faster time scales. In this distributed control context, converging faster is equivalent to reduced tracking error even in the presence of poor information. 

We further make clear the contribution, as the primal algorithm~\eqref{eq:distPrimAlg} used to solve~\eqref{prob:predMemResAlloc} may appear ubiquitous. However, this is \emph{not} the case as we have two key contributions: (i) the inclusion of predictions and consequently the ``horizon shifting'' matrix $\hat{P}$ and (ii) the inclusion of the term:
  
in the objective of~\eqref{prob:predMemResAlloc}, which can be thought of as \emph{state-augmentation} where we have augmented a fictitious state $d^i_{t-1|t}$ that we require to be close to the actual previous value $\hat{d}^i_{t-1|t-1}$. 

Since our constraints (and most constraints in engineering systems) have a ``memory'' aspect to them, without  including~\eqref{eq:prevPenTerm} in the objective would cause the constraint set~\eqref{eq:consSet} to be time varying. Instead, we have moved this time varying aspect to the objective function through the state augmentation process. The importance of a fixed constraint set is primarily for analysis; without this fixed constraint set, providing reasonable theoretical guarantees about the algorithm~\eqref{eq:fullAlg} is near impossible. We motivate this point through the following proposition, which we stress is possible due to the additional term~\eqref{eq:prevPenTerm} in~\eqref{prob:predMemResAlloc}.

In the above, the set $\mathcal{D}$ is the \emph{same} set that the algorithm~\eqref{eq:fullAlg} uses during implementation. 
\subsubsection{Hypothetical Scenario}
We now examine what would happen when~\eqref{eq:prevPenTerm} is not included in the objective. Thus the decision variable is $D_t$, as defined in Section~\ref{sec:gridNeed}. In this scenario the constraint set is time varying and state dependent, as the initial conditions for the constraints depend on the previous solution values. Furthermore, the appropriate fixed point definition for $D^*_t$ is now,
\begin{align}
D^*_t &= \Pi_{\mathcal{D}^*}\bigg(D^*_t - \alpha\nabla\eta(D^*_t)\bigg), \\ 
\mathcal{D}^* &\triangleq \mathcal{D}^1(d^{1,*}_{t-1|t-1})\times\dots\times\mathcal{D}^N(d^{N,*}_{t-1|t-1}),
\end{align}
thus $\mathcal{D}^*$ is the constraint set computed with the previous optimal values.
Since the algorithm~\eqref{eq:fullAlg} no longer shares the same projection set as the optimal trajectory, applying the standard non-expansive property of the projection is not directly applicable, for instance
\begin{align} \label{eq:failedProj}
\left|\left|D_t - D^*_t\right|\right| = \left|\left|\Pi_{\mathcal{D}}(\psi_t) - \Pi_{\mathcal{D}^*}(\psi_t^*)\right|\right| \nleq \left|\left|\psi_t - \psi^*_t\right|\right|,
\end{align}
since the non-expansive property requires that $\mathcal{D} = \mathcal{D}^*$.
This is problematic, as~\eqref{eq:failedProj} is the starting point for showing convergence to the optimal solution $D^*_t$. Thus showing convergence or boundedness will be greatly complicated, and likely lead to a lackluster bound.
\fi
\section{Stability} \label{sec:stability}
\subsection{Preliminaries}
\ifx0
We start with the following assumptions and definitions,
\begin{align}
	\left|\left|S_t\right|\right|_{\infty} &\triangleq \sup_{t\in\mathbb{N}}\left|S_t\right| < \infty, \\
	\bar{S}_\alpha 
	 &\triangleq \sup_{t\in\mathbb{N}}\left|S_{t}-S_{t+\alpha}\right| < \infty, \ \forall \ \alpha \in \mathbb{N}, \\ 
	\bar{S} &= \sup_{\alpha\in\mathbb{N}}\bar{S}_{\alpha} < \infty,
\end{align}
which in all practicality will never be violated. 
\fi
We list a string of results that will be useful for the analysis of the proposed algorithm~\eqref{eq:fullAlg}. %For the interested reader, these propositions (as well as the ones in previous sections) appear as lemmas with proofs in an expanded version of this paper~\cite{}.

\ifx 0
\begin{prop}
	The matrices $\hat{P}$ and $P$ have the following properties,
	\begin{align} \nonumber
		&\text{(i):} \quad \big\|\hat{P}(x-y)\big\| = \|x-y\|, \\ \nonumber &\text{(ii):} \quad\|P(x-y)\| = \|x-y\|, \\ \nonumber
		&\text{(iii):} \quad \hat{P}^{-1}  = \hat{P}^{T}, \ \text{and} \ P^{-1}  = I_{N}\otimes\hat{P}^{T},%\\ \nonumber 
		%&\text{(iv):}\quad P^N = I_N, \quad \text{for} \ P \in \mathbb{R}^{N\times N}
	\end{align}
\end{prop}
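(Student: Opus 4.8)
The plan is to recognize that $\hat{P}$ is nothing more than a permutation matrix---specifically a single cyclic shift---and then to reduce all three claims to the one fact that permutation matrices are orthogonal. First I would read off the block definition of $\hat{P}$ to confirm its combinatorial structure: the identity block $I_{N_p}$ places a single $1$ in columns $2,\dots,N_p+1$ of rows $1,\dots,N_p$, while the last row carries its lone $1$ in the first column. Hence for $v\in\mathbb{R}^{N_p+1}$ one has $\hat{P}v = [v[2],\dots,v[N_p+1],v[1]]^T$, a cyclic permutation of the entries of $v$. The cleanest way to establish orthogonality is then to compute the $(k,\ell)$ entry of $\hat{P}^T\hat{P}$ as the inner product of columns $k$ and $\ell$ of $\hat{P}$: each column has exactly one nonzero entry (a $1$), and distinct columns place that entry in distinct rows, so this inner product is $\delta_{k\ell}$ and therefore $\hat{P}^T\hat{P}=I_{N_p+1}$.

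From $\hat{P}^T\hat{P}=I_{N_p+1}$ the first half of claim (iii) is immediate, since a left inverse of a square matrix is its two-sided inverse, giving $\hat{P}^{-1}=\hat{P}^T$. Claim (i) then follows from the standard quadratic-form computation $\|\hat{P}(x-y)\|^2 = (x-y)^T\hat{P}^T\hat{P}(x-y) = (x-y)^T(x-y) = \|x-y\|^2$, followed by taking square roots.

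For the statements about $P=I_N\otimes\hat{P}$, the plan is to invoke two standard Kronecker-product identities~\cite{horn2012matrix}: the transpose rule $(A\otimes B)^T = A^T\otimes B^T$ and the mixed-product rule $(A\otimes B)(C\otimes D)=(AC)\otimes(BD)$. Applying these in sequence gives $P^TP = (I_N\otimes\hat{P}^T)(I_N\otimes\hat{P}) = (I_N I_N)\otimes(\hat{P}^T\hat{P}) = I_N\otimes I_{N_p+1} = I_{(N_p+1)N}$, which simultaneously delivers $P^{-1}=P^T=I_N\otimes\hat{P}^T$---the second half of (iii)---and, by the same quadratic-form argument as above applied to $w=x-y$, the norm preservation (ii).

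Honestly, there is no serious obstacle here: the proposition is pure linear algebra, and the only care required is bookkeeping. The single step worth writing out rather than asserting is the orthogonality $\hat{P}^T\hat{P}=I_{N_p+1}$ via the column-inner-product argument, since every remaining claim is a one-line consequence of it together with the two Kronecker identities. The reason this result matters, and the motivation for isolating it as a preliminary, is that it is precisely what will let the subsequent stability analysis pull the factor $P$ out of a projection argument without paying any penalty in the norm, i.e. it is the formal justification that the ``horizon shifting'' operation in~\eqref{eq:fullAlg} is norm-preserving.
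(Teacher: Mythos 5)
Your proof is correct in every step: $\hat{P}$ is exactly the cyclic-shift permutation matrix you describe, the column-inner-product computation gives $\hat{P}^T\hat{P}=I_{N_p+1}$, and the transpose and mixed-product Kronecker identities yield $P^TP=(I_NI_N)\otimes(\hat{P}^T\hat{P})=I_{(N_p+1)N}$, from which (i)--(iii) all follow as you state. There is no paper proof to compare against: the proposition is asserted without proof (in the source it even sits inside a conditional block that excludes it from the compiled document), evidently regarded as elementary linear algebra, so your write-up supplies precisely the justification the paper takes for granted. Your closing remark about its role is also on target --- the norm preservation $\|Pz\|=\|z\|$ is what the paper silently invokes, e.g., in the proof of Lemma~\ref{lem:invModel}, where the bound $\|Pz^*_{t-1}-z^*_{t-1}\|\leq 2\|z^*_{t-1}\|$ rests on it.
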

\fi
\begin{prop} \label{prop:hessStruc}
	The Hessian $\nabla^2\eta$ and gradient $\nabla\eta(z_t)$ can be expressed in the following form, letting $H^i \triangleq \text{diag}([\bar{\zeta}^i,\zeta^i,\dots,\zeta^i]) \in \mathbb{R}^{N_p+1}$, for all $z_t \in \mathbb{R}^{(N_p+1)N}$
	\begin{align} \nonumber
	&\text{(i)}: \quad \nabla^2\eta = \mathbf{1}_N\otimes\bigg( \mathbf{1}^T_N\otimes I_{N_p+1}\bigg) + \bigoplus_{i=1}^NH^i, \\ \nonumber
	&\text{(ii)}: \quad \nabla\eta(z_t) = (\nabla^2\eta)z_t - u_t,  
	\end{align}
	where $\bigoplus$ denotes the Kronecker sum of  matrices~\cite{horn2012matrix}, diag$(a)$ denotes the diagonal matrix of the vector a, $\mathbf{1}_N \in \mathbb{R}^N$ is the column vector of all ones, and the vector $u_t \in \mathbb{R}^{(N_p+1)N}$ is,
	\begin{align} 
		u_t &= [(u_t^1)^T,\dots,(u_t^N)^T]^T \ \text{with}, \\  u_t^i &= [d^i_{t-1|t-1}+s_{t-1},s_t,\dots,s_{t+\predHorzLen}]^T.
	\end{align}
\end{prop}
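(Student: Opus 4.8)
The plan is to exploit the fact that $\eta(z_t)$ is a sum of squared affine functions of $z_t$, hence a quadratic form. Any quadratic $\eta(z)=\tfrac12 z^TQz-u^Tz+c$ has constant Hessian $\nabla^2\eta=Q$ and affine gradient $\nabla\eta(z)=Qz-u$, so once (i) pins down $Q$, statement (ii) reduces to reading off the linear coefficient $u_t$. I would first split $\eta$ into a \emph{coupling} part collecting every tracking square --- the terms $J_G(x_{\tau|t})=(\mathbf 1_N^Tx_{\tau-t|t}-s_\tau)^2$ for $\tau\in\{t,\dots,t+\predHorzLen\}$ together with the tracking square $(\sum_i d^i_{t-1|t}-s_{t-1})^2$ inside $J_{\text m}$ --- and a \emph{local} part collecting the weighted squares $\zeta^i(d^i_{\tau|t})^2$ from $J_L$ and $\bar\zeta^i(d^i_{t-1|t}-d^i_{t-1|t-1})^2$ from $J_{\text m}$. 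The observation that makes the statement clean is that the memory term is algebraically just a ``$\tau=t-1$'' copy of $J_L+J_G$: its tracking square carries the same all-ones load coupling as the others, and its local square is a weighted quadratic in the single variable $d^i_{t-1|t}$, differing only by the weight $\bar\zeta^i$ and the offset $d^i_{t-1|t-1}$. This lets me treat all $N_p+1$ slots of $z^i_t=[d^i_{t-1|t},d^i_{t|t},\dots,d^i_{t+\predHorzLen|t}]^T$ on equal footing.

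For the Hessian I would compute the two parts separately. Each tracking square at a fixed time slot depends on the loads only through $\mathbf 1_N^T(\cdot)$, so after the global $\tfrac12$ its Hessian over the load dimension is $\mathbf 1_N\mathbf 1_N^T$; since distinct time slots are decoupled, and using the load-perspective stacking of $z_t$ (all $N_p+1$ slots of load $1$, then load $2$, and so on), assembling these blocks yields $(\mathbf 1_N\mathbf 1_N^T)\otimes I_{N_p+1}$. By associativity of the Kronecker product and $\mathbf 1_N\otimes\mathbf 1_N^T=\mathbf 1_N\mathbf 1_N^T$, this equals $\mathbf 1_N\otimes(\mathbf 1_N^T\otimes I_{N_p+1})$, the first summand in (i). The local part is separable across loads, and for load $i$ its diagonal Hessian block is $\operatorname{diag}([\bar\zeta^i,\zeta^i,\dots,\zeta^i])=H^i$ after the $\tfrac12$; stacking these block-diagonally is exactly $\bigoplus_{i=1}^N H^i$, giving the second summand. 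Adding the two pieces proves (i).

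For (ii) I would collect the degree-one terms of $\eta$: expanding each tracking square contributes $-s_\tau\sum_i d^i_{\tau|t}$ for every $\tau\in\{t-1,\dots,t+\predHorzLen\}$, while the memory local square contributes the cross term $-\bar\zeta^i d^i_{t-1|t-1}d^i_{t-1|t}$. Grouping these coefficients load-by-load in the slot order of $z^i_t$ produces the stated $u^i_t$, and $\nabla\eta(z_t)=(\nabla^2\eta)z_t-u_t$ then follows from the quadratic form. The calculations are routine; the one step demanding care --- and the main potential pitfall --- is the index bookkeeping that matches the Kronecker ordering to the load-perspective stacking, compounded by the dual role of the memory slot $d^i_{t-1|t}$, which feeds both the all-ones tracking coupling and a weighted local penalty. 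To guard against a transposition error I would verify the $((i,k),(i',k'))$ entries of $\mathbf 1_N\otimes(\mathbf 1_N^T\otimes I_{N_p+1})$ against the tracking coupling explicitly, confirming they are $1$ precisely when $k=k'$ for every load pair.
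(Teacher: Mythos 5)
Your overall strategy is sound and, in fact, is the argument the paper implicitly relies on: the proposition is stated \emph{without} proof in the paper, so there is no official proof to compare against, and the natural route is exactly yours --- observe that $\eta$ is a quadratic form, split it into the all-ones tracking coupling (with the memory term acting as a ``$\tau=t-1$'' copy of $J_L+J_G$) and the separable local squares, and read off the Hessian blocks and the linear coefficient. Your part (i) is correct in every detail, including the factor bookkeeping after the global $\tfrac12$, the identity $\mathbf{1}_N\otimes(\mathbf{1}_N^T\otimes I_{N_p+1})=(\mathbf{1}_N\mathbf{1}_N^T)\otimes I_{N_p+1}$, and the (correct) reading of $\bigoplus_{i=1}^N H^i$ as the direct block-diagonal sum rather than the Horn--Johnson Kronecker sum $A\otimes I + I\otimes B$, which is what the formula requires despite the paper's citation.

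There is, however, a genuine inconsistency in your part (ii) that you assert away rather than resolve. You correctly compute that the memory square $\tfrac12\bar{\zeta}^i(d^i_{t-1|t}-d^i_{t-1|t-1})^2$ contributes the cross term $-\bar{\zeta}^i\, d^i_{t-1|t-1}\, d^i_{t-1|t}$, so collecting linear coefficients in the first slot yields
\begin{align} \nonumber
u_t^i[1] \;=\; \bar{\zeta}^i d^i_{t-1|t-1} + s_{t-1},
\end{align}
whereas the proposition states $u_t^i[1]=d^i_{t-1|t-1}+s_{t-1}$; these agree only if $\bar{\zeta}^i=1$. Your claim that grouping the coefficients ``produces the stated $u^i_t$'' therefore contradicts your own computation: one can verify directly that $\partial\eta/\partial d^i_{t-1|t} = \bar{\zeta}^i d^i_{t-1|t} + \sum_{i'} d^{i'}_{t-1|t} - \bar{\zeta}^i d^i_{t-1|t-1} - s_{t-1}$, whose constant part carries the weight $\bar{\zeta}^i$. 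The statement as printed thus appears to contain a typo (under the Theorem~\ref{thm:thmPrimal} assumption $\bar{\zeta}^i=\zeta^i$ the entry should read $\zeta^i d^i_{t-1|t-1}+s_{t-1}$); a complete proof must either carry the $\bar{\zeta}^i$ factor into $u_t^i$ and flag the discrepancy, or note the restriction $\bar{\zeta}^i=1$. The slip is benign downstream --- Theorem~\ref{thm:thmPrimal} uses only the form $\nabla\eta(z)=(\nabla^2\eta)z-u$ together with assumptions \textbf{A1}--\textbf{A2}, which absorb the redefinition of $u_t$ --- but as a proof of the proposition as stated, your final step does not close.
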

We have dropped the dependence of $z_t$ on the Hessian, as the Hessian is a constant matrix, where additionally, based on the form given in Proposition~\ref{prop:hessStruc}, it is symmetric, i.e., $\nabla^2\eta = (\nabla^2\eta)^T$ and positive definite.
\begin{prop} \label{prop:hessDiffBound}
Let $\bar{\zeta}^i = \zeta^i$ for all $i \in \{1, \dots, N \}$, then
\begin{align} \nonumber
\big\|P\nabla^2\eta - \nabla^2\eta P\big\| =0.
\end{align}
\end{prop}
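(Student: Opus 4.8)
The plan is to show that the commutator $P\nabla^2\eta - \nabla^2\eta P$ is identically the zero matrix, from which $\|P\nabla^2\eta - \nabla^2\eta P\| = 0$ follows at once. Using Proposition~\ref{prop:hessStruc}(i), I would first split the Hessian into two pieces. The coupling piece is $C \triangleq \mathbf{1}_N\otimes(\mathbf{1}_N^T\otimes I_{N_p+1})$, which by associativity of the Kronecker product equals $(\mathbf{1}_N\mathbf{1}_N^T)\otimes I_{N_p+1}$. The remaining piece is the block-diagonal matrix $B \triangleq \bigoplus_{i=1}^N H^i$ whose $i$-th diagonal block is $H^i$ (dimension counting forces this to be the direct/block-diagonal sum, of size $(N_p+1)N$). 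Since the commutator is linear in $\nabla^2\eta = C + B$, it suffices to verify that $P$ commutes with $C$ and with $B$ separately.

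For the coupling piece I would invoke the mixed-product property $(A\otimes B)(D\otimes E) = (AD)\otimes(BE)$ together with $P = I_N\otimes\hat{P}$. This gives $PC = (I_N\mathbf{1}_N\mathbf{1}_N^T)\otimes(\hat{P}I_{N_p+1}) = (\mathbf{1}_N\mathbf{1}_N^T)\otimes\hat{P}$ and, symmetrically, $CP = (\mathbf{1}_N\mathbf{1}_N^T I_N)\otimes(I_{N_p+1}\hat{P}) = (\mathbf{1}_N\mathbf{1}_N^T)\otimes\hat{P}$, so the two agree. Note this part requires no assumption on the weights.

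For the block-diagonal piece, the essential observation, and the reason the hypothesis $\bar{\zeta}^i = \zeta^i$ is imposed, is that under this hypothesis each block collapses to a scalar multiple of the identity, $H^i = \text{diag}([\bar{\zeta}^i,\zeta^i,\dots,\zeta^i]) = \zeta^i I_{N_p+1}$. Both $P = I_N\otimes\hat{P}$ and $B$ are block diagonal with $N$ blocks of size $N_p+1$, so their products are block diagonal as well, the $i$-th block of $PB$ being $\hat{P}H^i = \zeta^i\hat{P}$ and the $i$-th block of $BP$ being $H^i\hat{P} = \zeta^i\hat{P}$. These coincide, hence $PB = BP$. Combining both pieces yields $P\nabla^2\eta = \nabla^2\eta P$ and therefore $\|P\nabla^2\eta - \nabla^2\eta P\| = 0$.

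The only genuinely substantive step is recognizing why the equal-weight hypothesis is needed: the matrix $\hat{P}$ is a cyclic shift, and a diagonal matrix commutes with a cyclic shift only when all of its diagonal entries are equal. If instead $\bar{\zeta}^i \neq \zeta^i$, the first diagonal entry of $H^i$ would differ from the rest and one would have $\hat{P}H^i \neq H^i\hat{P}$, so the conclusion would fail. Everything else is routine Kronecker-product bookkeeping, and I expect no real difficulty beyond keeping the block structure straight.
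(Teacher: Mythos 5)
Your proof is correct. The paper actually states Proposition~\ref{prop:hessDiffBound} without any proof (it is invoked only in the proof of Theorem~\ref{thm:thmPrimal} via the remark that the choice $\bar{\zeta}^i=\zeta^i$ eliminates the commutator term), and your computation supplies exactly the verification the paper implicitly relies on: splitting $\nabla^2\eta$ per Proposition~\ref{prop:hessStruc} into the coupling part $(\mathbf{1}_N\mathbf{1}_N^T)\otimes I_{N_p+1}$, which commutes with $P=I_N\otimes\hat{P}$ by the mixed-product property with no assumption on the weights, and the block-diagonal part, where you correctly read $\bigoplus_{i=1}^N H^i$ as the direct (block-diagonal) sum on dimensional grounds and observe that the hypothesis $\bar{\zeta}^i=\zeta^i$ collapses each $H^i$ to $\zeta^i I_{N_p+1}$, which trivially commutes with the cyclic shift $\hat{P}$ --- your added remark that a diagonal block with $\bar{\zeta}^i\neq\zeta^i$ would fail to commute with $\hat{P}$ is also accurate and explains why the hypothesis is not vacuous.
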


%We include the above results as the above bound will be required in the stability analysis. Of particular interest is that the bound does not depend on $N$.

\begin{lem}[Theorem 2.1,~\cite{Daniel:1973}] \label{lem:stabQuad2}
	For any $s,\tau \in \mathbb{N}$, the following bound holds,
	\begin{align}
		\frac{1}{N}\|z^*_s-z^*_{\tau}\| \leq \frac{\bar{u}^*_{s,\tau}}{\lambda_{\text{min}}(\nabla^2\eta)},
	\end{align}
	where $\bar{u}^*_{s,\tau} = \|u_s^*-u^*_{\tau}\|$.
\end{lem}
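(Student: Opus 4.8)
The plan is to recognize problem~\eqref{prob:predMemResAlloc} as a strictly convex (definite) quadratic program in which the only time-varying data is the linear term $u_t$, and then to reproduce the sensitivity argument underlying Theorem~2.1 of~\cite{Daniel:1973} through the variational-inequality characterization of the minimizers. By Proposition~\ref{prop:hessStruc} the objective $\eta$ is quadratic with a constant, symmetric, positive-definite Hessian $Q \eqdef \nabla^2\eta$ and gradient $\nabla\eta(z) = Qz - u_t$, and the feasible set $\mathcal{D}$ is convex, compact, and nonempty, so for each $t$ the minimizer $z^*_t$ exists and is unique. The decisive structural fact is that $Q$ does \emph{not} depend on $t$: the two instances at times $s$ and $\tau$ share the same Hessian and the same feasible set, differing only through $u_s$ versus $u_\tau$.

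First I would write the first-order optimality condition for $z^*_s$ and $z^*_\tau$ as variational inequalities: for every $z \in \mathcal{D}$,
\[
\langle Q z^*_s - u_s,\ z - z^*_s\rangle \geq 0, \qquad \langle Q z^*_\tau - u_\tau,\ z - z^*_\tau\rangle \geq 0 .
\]
Next I would test the first inequality at $z = z^*_\tau$ and the second at $z = z^*_s$ and add them. Writing $d \eqdef z^*_s - z^*_\tau$, the cross terms combine and, after rearranging, the quadratic form in $Q$ is isolated on one side and the perturbation on the other:
\[
\langle Q d,\ d\rangle \leq \langle u_s - u_\tau,\ d\rangle .
\]

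I would then lower-bound the left-hand side by $\lambda_{\min}(Q)\|d\|^2$ using positive definiteness (guaranteed by Proposition~\ref{prop:hessStruc} and the accompanying remark that $Q \succ 0$), and upper-bound the right-hand side by $\|u_s - u_\tau\|\,\|d\|$ via Cauchy--Schwarz. Dividing through by $\|d\|$ (the estimate is trivial when $d = 0$) yields the Lipschitz bound $\|z^*_s - z^*_\tau\| \leq \bar{u}^*_{s,\tau}/\lambda_{\min}(\nabla^2\eta)$. The claimed inequality follows at once, since for $N \geq 1$ the factor $1/N$ on the left only weakens this estimate; the $1/N$ is a conservative restatement compatible with the form cited from~\cite{Daniel:1973}.

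I do not expect a genuine obstacle: this is the textbook perturbation bound for definite QPs. The only points that truly require care are verifying the hypotheses that enable it, namely that the Hessian is constant in both $z$ and $t$ (exactly Proposition~\ref{prop:hessStruc}(ii)) so that the two problems differ solely through $u_t$, and that $Q$ is positive definite so that $\lambda_{\min}(Q) > 0$ and each minimizer is unique. If one prefers to cite~\cite{Daniel:1973} directly, the remaining work reduces to checking that problem~\eqref{prob:predMemResAlloc} fits its ``definite quadratic program'' template with an identical feasible set at times $s$ and $\tau$.
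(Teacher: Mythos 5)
Your proof is correct, but it is genuinely different from what the paper does: the paper offers no argument at all for this lemma, deferring entirely to Theorem~2.1 of~\cite{Daniel:1973}, which is a general perturbation result for definite quadratic programs allowing changes in the Hessian, the linear term, and the constraint data (whence, in its general form, constants that the paper absorbs into the stated $1/N$ factor). You instead reconstruct the special case needed here from first principles via the variational-inequality characterization: testing $\langle Qz^*_s-u_s,\ z-z^*_s\rangle\geq 0$ at $z=z^*_\tau$ and vice versa, adding, and using strong monotonicity of $z\mapsto Qz$ together with Cauchy--Schwarz. This buys two things. First, it is self-contained and verifies exactly the hypotheses that make the citation legitimate --- that by Proposition~\ref{prop:hessStruc} the Hessian $Q=\nabla^2\eta$ is constant in both $z$ and $t$, and that the feasible set $\mathcal{D}$ is the \emph{same} at times $s$ and $\tau$, which is precisely the point of the paper's state-augmentation construction (without it, the two instances of~\eqref{prob:predMemResAlloc} would have different feasible sets and your cross-testing step would be unavailable). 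Second, your bound $\|z^*_s-z^*_\tau\|\leq \bar{u}^*_{s,\tau}/\lambda_{\min}(\nabla^2\eta)$ is in fact stronger than the stated inequality by a factor of $N$, and you correctly observe that the $1/N$ on the left-hand side only weakens it; since the paper later uses the lemma (in Lemma~\ref{lem:invModel} and Theorem~\ref{thm:thmPrimal}) only in the stated $1/N$ form, your sharper constant would propagate to tighter ISS gains if carried through. The only nitpick is that compactness of $\mathcal{D}$ is not actually needed for your argument --- closedness, convexity, nonemptiness, and $\lambda_{\min}(Q)>0$ already give existence and uniqueness of the minimizers --- but invoking it is harmless.
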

\begin{proof}
	See~\cite{Daniel:1973}.
\end{proof} 

\begin{lem} \label{lem:invModel}
	For all $t\in\mathbb{N}$ the following holds,
	\begin{align}	\nonumber
		\frac{1}{N}\|Pz^*_{t-1} - z^*_{t}\| &\leq  \frac{\bar{g}^*_t}{\lambda_{\text{min}}(\nabla^2\eta)},
	\end{align}
	where $\bar{g}^*_t = \bar{u}^*_{t,t-1} + 2\tilde{u}^{*}_t$, $\tilde{u}^{*}_t = \|u^*_{t-1} - u^{*,0}_t\|$ and $u^{*,0}_t$ is the value that produces an optimal solution of all zeros.
\end{lem}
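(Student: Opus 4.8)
The plan is to bound the target by a single triangle-inequality split that reduces everything to Lemma~\ref{lem:stabQuad2}, the sensitivity estimate for the strongly convex quadratic problem~\eqref{prob:predMemResAlloc}. Concretely, I would insert $z^*_{t-1}$ and write
\begin{align} \nonumber
\|Pz^*_{t-1} - z^*_{t}\| \leq \|Pz^*_{t-1} - z^*_{t-1}\| + \|z^*_{t-1} - z^*_{t}\|,
\end{align}
so that the second summand is already a difference of two optimal solutions and the first summand isolates the pure effect of the shift matrix $P$. The guiding observation is that $Pz^*_{t-1}$ should \emph{not} be treated as the minimizer of a ``shifted'' problem: that route would demand $P\mathcal{D}=\mathcal{D}$, which fails because the cyclic shift built into $\hat{P}$ introduces a spurious wrap-around rate difference between the last and first horizon entries, violating~\eqref{eq:rateCon}. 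Avoiding that identification, $P$ will enter only through its norm.

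For the second summand I would invoke Lemma~\ref{lem:stabQuad2} directly with $s=t$ and $\tau=t-1$, which yields $\tfrac{1}{N}\|z^*_{t-1}-z^*_{t}\| \leq \bar{u}^*_{t,t-1}/\lambda_{\text{min}}(\nabla^2\eta)$ and supplies precisely the first term of $\bar{g}^*_t$. For the first summand the key fact is that $P=I_N\otimes\hat{P}$ is orthogonal: $\hat{P}$ is a permutation (cyclic shift) matrix, hence $\hat{P}^T\hat{P}=I$, and a Kronecker product of orthogonal matrices is orthogonal, so $\|Pv\|=\|v\|$ for all $v$. Consequently $\|Pz^*_{t-1}-z^*_{t-1}\|=\|(P-I)z^*_{t-1}\|\leq \|Pz^*_{t-1}\|+\|z^*_{t-1}\|=2\|z^*_{t-1}\|$, which is exactly where the factor $2$ in $\bar{g}^*_t$ originates.

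To turn $2\|z^*_{t-1}\|$ into the bound $2\tilde{u}^*_t/\lambda_{\text{min}}(\nabla^2\eta)$ I would anchor $z^*_{t-1}$ to a known minimizer by a second application of Lemma~\ref{lem:stabQuad2}. By definition $u^{*,0}_t$ is the linear-term vector whose associated problem (same Hessian $\nabla^2\eta$, same feasible set $\mathcal{D}$) has optimal solution $0$; taking one problem to be the one at time $t-1$ (data $u^*_{t-1}$, minimizer $z^*_{t-1}$) and the other to be this auxiliary problem (data $u^{*,0}_t$, minimizer $0$) gives $\tfrac{1}{N}\|z^*_{t-1}\|=\tfrac{1}{N}\|z^*_{t-1}-0\|\leq \|u^*_{t-1}-u^{*,0}_t\|/\lambda_{\text{min}}(\nabla^2\eta)=\tilde{u}^*_t/\lambda_{\text{min}}(\nabla^2\eta)$. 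Combining the two summands and dividing by $N$ then delivers $\tfrac{1}{N}\|Pz^*_{t-1}-z^*_{t}\|\leq (\bar{u}^*_{t,t-1}+2\tilde{u}^*_t)/\lambda_{\text{min}}(\nabla^2\eta)=\bar{g}^*_t/\lambda_{\text{min}}(\nabla^2\eta)$.

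I expect the main obstacle to be conceptual rather than computational. The natural instinct, given the warm-start interpretation of $\hat{P}$ (and the commuting identity of Proposition~\ref{prop:hessDiffBound}), is to view $Pz^*_{t-1}$ as a solution of a shifted optimization problem and compare it to $z^*_t$ via one invocation of the sensitivity bound; this is blocked by the wrap-around issue noted above. The correct move is to exploit only the orthogonality of $P$ and to accept that Lemma~\ref{lem:stabQuad2} is a \emph{relative} (difference-of-solutions) bound, so that controlling the absolute size $\|z^*_{t-1}\|$ forces the use of the zero-solution reference point $u^{*,0}_t$ at the price of the factor $2$. A minor point to verify along the way is the admissibility of that reference, i.e.\ that $0$ is genuinely the minimizer for some data $u^{*,0}_t$ of the same problem family, which is exactly what the definition of $u^{*,0}_t$ guarantees.
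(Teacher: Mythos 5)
Your proposal is correct and follows essentially the same route as the paper's proof: the identical triangle-inequality split inserting $z^*_{t-1}$, Lemma~\ref{lem:stabQuad2} applied to the pair $(u^*_t,u^*_{t-1})$ for the difference term, the factor $2$ from the norm-preserving (permutation) structure of $P$, and a second application of Lemma~\ref{lem:stabQuad2} against the zero-solution data $u^{*,0}_t$ to bound $\|z^*_{t-1}\|$ by $N\tilde{u}^*_t/\lambda_{\text{min}}(\nabla^2\eta)$. The surrounding discussion of why $Pz^*_{t-1}$ cannot be identified with the minimizer of a shifted problem is sound motivation but not part of the paper's (terser) argument.
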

\begin{proof}
	See appendix.
\end{proof} 
This result will render itself useful for the stability analysis. Also necessary in our stability results is the class of $\mathcal{K}$ and $\mathcal{KL}$ functions, that hold their usual definitions as seen, e.g. in~\cite{khalil2002nonlinear}.
\ifx 0
\begin{lem} \label{lem:stabQuad}
	~\cite{Daniel:1973} Suppose $D^{*}_1$ and $D^*_2$ are the solutions to~\eqref{prob:primalOnly} with $S_t = S_1$ and $S_t = S_2$, respectively. Then letting,
	\begin{align} \nonumber
	\varepsilon = \left|S_1-S_2\right|, \quad \text{and} \quad \zeta^{\text{min}} = \min_{1\leq i\leq N}\zeta^i,
	\end{align}
	we have the following,
	\begin{align} \nonumber
	\left|\left|D^*_2-D^*_1\right|\right| \leq \frac{N\varepsilon}{\zeta^{\text{min}}} < \infty
	\end{align}
\end{lem}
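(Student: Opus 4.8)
The plan is to control $Pz^*_{t-1}-z^*_t$ by splitting it, via the triangle inequality, into a \emph{temporal mismatch} $z^*_{t-1}-z^*_t$ and a \emph{shift mismatch} $Pz^*_{t-1}-z^*_{t-1}$, and to bound each piece with the Daniel-type sensitivity estimate of Lemma~\ref{lem:stabQuad2}. The enabling observation is that, by Proposition~\ref{prop:hessStruc}, $\eta$ is a strongly convex quadratic with \emph{constant} positive-definite Hessian $\nabla^2\eta$ and linear data $u^*_t$, so each $z^*_t$ is the unique minimizer of this quadratic over the \emph{fixed} set $\mathcal{D}$ (this is exactly the payoff of the state augmentation recorded in Proposition~\ref{prop:fixPoint}). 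Because every optimizer I compare lives over the \emph{same} $\mathcal{D}$ with the \emph{same} $\nabla^2\eta$ and the members of the family differ only in their linear term, Lemma~\ref{lem:stabQuad2} applies directly. The temporal mismatch is then immediate: taking $s=t-1$, $\tau=t$ gives $\tfrac1N\|z^*_{t-1}-z^*_t\|\le \bar u^*_{t,t-1}/\lambda_{\text{min}}(\nabla^2\eta)$, supplying the $\bar u^*_{t,t-1}$ contribution to $\bar g^*_t$.

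The shift mismatch is the part that requires care. The key structural fact is that $\hat P$ is a permutation matrix, so $P=I_N\otimes\hat P$ is orthogonal, hence an isometry with $\|Pv\|=\|v\|$, and it fixes the origin, $P\mathbf{0}=\mathbf{0}$. I would introduce the all-zeros reference $z^{*,0}=\mathbf{0}$, which by the standing hypothesis is the optimizer over $\mathcal{D}$ for the linear data $u^{*,0}_t$. Since $\mathbf{0}$ is a fixed point of $P$, inserting it and using the triangle inequality followed by the isometry of $P$ collapses the permutation error: $\|Pz^*_{t-1}-z^*_{t-1}\|=\|P(z^*_{t-1}-\mathbf{0})-(z^*_{t-1}-\mathbf{0})\|\le \|P(z^*_{t-1}-\mathbf{0})\|+\|z^*_{t-1}-\mathbf{0}\|=2\|z^*_{t-1}-z^{*,0}\|$. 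Now $z^*_{t-1}$ and $z^{*,0}$ are again optimizers over the common $\mathcal{D}$ differing only in their linear data $u^*_{t-1}$ and $u^{*,0}_t$, so a second application of Lemma~\ref{lem:stabQuad2} yields $\tfrac1N\|z^*_{t-1}-z^{*,0}\|\le \tilde u^*_t/\lambda_{\text{min}}(\nabla^2\eta)$, and therefore $\tfrac1N\|Pz^*_{t-1}-z^*_{t-1}\|\le 2\tilde u^*_t/\lambda_{\text{min}}(\nabla^2\eta)$.

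Adding the two estimates and recalling $\bar g^*_t=\bar u^*_{t,t-1}+2\tilde u^*_t$ gives the claimed bound. The main obstacle is precisely the shift term, and it is resolved by two realizations: first, that the non-expansiveness difficulty flagged in Section~\ref{sec:contrib} does \emph{not} recur here, because both quantities being compared are optimizers over the single fixed set $\mathcal{D}$ rather than over two different state-dependent sets; and second, that choosing a \emph{shift-invariant} feasible reference (the origin, a fixed point of $P$) lets the permutation error be absorbed into $2\|z^*_{t-1}-z^{*,0}\|$ through the isometry of $P$, after which Lemma~\ref{lem:stabQuad2} does the rest. I note that Proposition~\ref{prop:hessDiffBound} is \emph{not} needed for this lemma (only orthogonality of $P$ is used); the one point worth verifying is the standing hypothesis that $u^{*,0}_t$ exists, i.e.\ that $\mathbf{0}\in\mathcal{D}$ so the all-zeros trajectory is feasible and optimal for some linear data.
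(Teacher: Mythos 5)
Your proposal does not prove the statement under review. The lemma you were asked to prove is a sensitivity estimate for the (horizonless) resource allocation quadratic program: if $D^*_1$ and $D^*_2$ are the minimizers corresponding to reference values $S_1$ and $S_2$, then $\|D^*_2-D^*_1\|\le N\varepsilon/\zeta^{\text{min}}$ with $\varepsilon=|S_1-S_2|$. What you prove instead is $\frac{1}{N}\|Pz^*_{t-1}-z^*_t\|\le \bar{g}^*_t/\lambda_{\text{min}}(\nabla^2\eta)$, which is Lemma~\ref{lem:invModel} --- a different result. In fact your argument (triangle-inequality split into the temporal mismatch $z^*_{t-1}-z^*_t$ and the shift mismatch $Pz^*_{t-1}-z^*_{t-1}$, the latter collapsed to $2\|z^*_{t-1}\|$ via orthogonality of $P$ and then bounded through the zero-producing data $u^{*,0}_t$) reproduces, step for step, the paper's appendix proof of Lemma~\ref{lem:invModel}; your explicit use of $P\mathbf{0}=\mathbf{0}$ and your caveat that $\mathbf{0}\in\mathcal{D}$ must hold for $u^{*,0}_t$ to exist are, if anything, slightly more careful than the paper's version. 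But nothing in it touches $D^*_1$, $D^*_2$, $\varepsilon$, or the constant $N/\zeta^{\text{min}}$. Worse, relative to the target statement the argument is circular: both of your estimates invoke Lemma~\ref{lem:stabQuad2} (Daniel's Theorem 2.1) as a black box, and that Lipschitz stability of QP minimizers with respect to the linear data is precisely the fact the statement under review instantiates --- the paper's own ``proof'' of it is nothing but the citation to Daniel (1973).

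A self-contained proof of the actual statement is short and worth recording. The two problems share the same fixed feasible set and the same quadratic part, whose Hessian $\mathbf{1}_N\mathbf{1}_N^T+\mathrm{diag}(\zeta^1,\dots,\zeta^N)$ is strongly convex with modulus at least $\zeta^{\text{min}}$, since the rank-one coupling term is positive semidefinite. The two objectives' gradients differ only through the reference: at any point $D$, the difference is the constant vector $\mathbf{1}_N(S_1-S_2)$, of norm $\sqrt{N}\,\varepsilon$. Writing the variational inequality at $D^*_1$ with test point $D^*_2$, and at $D^*_2$ with test point $D^*_1$, adding the two, and applying strong monotonicity gives $\zeta^{\text{min}}\|D^*_2-D^*_1\|^2\le \langle \mathbf{1}_N(S_2-S_1),\, D^*_2-D^*_1\rangle$, whence by Cauchy--Schwarz $\|D^*_2-D^*_1\|\le \sqrt{N}\,\varepsilon/\zeta^{\text{min}}\le N\varepsilon/\zeta^{\text{min}}$, which even sharpens the stated constant. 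That is the argument you needed; what you wrote, while correct, belongs to Lemma~\ref{lem:invModel} and presupposes rather than establishes the sensitivity bound at issue.
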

\begin{proof}
	See~\cite{Daniel:1973}.
\end{proof}
This is essentially a ``stability'' result for a quadratic program. As we will see, this result will allow us to claim an input to state (ISS) stability result for the algorithm~\eqref{eq:distPrimAlg}, where the input is, for some $\alpha \in \mathbb{N}$, the difference of the reference signal: $\left|S_t-S_{t+\alpha}\right|$. 
\fi

\ifx 0
\begin{comment}
	The outer product in Proposition~\eqref{prop:hessStruc} is a consequence of the coupling term in~\eqref{prob:primalOnly}. The matrix $Z$ is a consequence of the regularization terms at each prediction time in~\eqref{prob:primalOnly}. This idea generalizes, coupling terms will form outer products whereas regularizations from terms similar to $Z$.
\end{comment}
\fi
\subsection{Stability: Main result}
Our main theoretical results for our proposed algorithm~\eqref{eq:distPrimAlg} is summarized in  Theorem~\ref{thm:thmPrimal}. If we treat the value $\|z_t - z^*_{t}\|$ as the ``state'' and an upper bound on the time varying aspects to the optimization problem as the ``input'', then Theorem~\ref{thm:thmPrimal} is a global input to state stability (ISS) result. 

%In addition, if time variations eventually vanish then one can achieve global asymptotic stability (GAS) to the asymptotic optimal solution, which is the result of Theorem~\ref{thm:GAS}. 

Practically, we want the magnitude $\|z_t - z^*_{t}\|$ to be small, as the optimal solution $z^*_t$ represents the value that optimally satisfies all of the specified criteria. %This conceptually must always be true, otherwise the problem is ill-posed.

The theorem below requires the following boundedness assumptions:
\begin{enumerate}
	\item[\textbf{A1}:] for all $t \in \mathbb{N}$, $\bar{g}^*_t < \bar{g} < \infty$, 
	\item[\textbf{A2}:] for all $t \in \mathbb{N}$, $\ell < t$, $\|Pu_{t-\ell} - u^*_t\| < \Delta < \infty$.
\end{enumerate}	
Then we denote $\bar{u} \triangleq \big(\frac{N\bar{g}}{\alpha\lambda_{\text{min}}(\nabla^2\eta)} + \Delta\big)$.
\begin{theorem}[Global-ISS] \label{thm:thmPrimal}
	 If assumptions \textbf{A1} and \textbf{A2} are satisfied, the step size $\alpha$ satisfies,
	\begin{align} \nonumber
			&\alpha \in \bigg(0, \frac{1}{\zeta^{\text{max}} + N}\bigg), \quad \text{where} \quad
			\zeta^{\text{max}} = \max_{1\leq i\leq N} \zeta^i,
	\end{align}
	and $\bar{\zeta}^i = \zeta^i$ for all $i\in\{1,\dots,N\}$,
	then for all $z_0 \in \mathbb{R}^{(N_p+1)N}$ there exists a $\Gamma \in \mathcal{K}$ and an $\Omega \in \mathcal{K}\mathcal{L}$ such that 
	\begin{align}\nonumber
	\|z_t - z^*_{t}\| \leq \Omega(\|z_0 - z^*_{0}\|,t) + \Gamma(\bar{u}) 
	\end{align}
	where $z_0$ is the initial iterate of~\eqref{eq:fullAlg}.
\end{theorem}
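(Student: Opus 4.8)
The plan is to turn the update~\eqref{eq:fullAlg} into a one-step contraction inequality for the error $e_t := \|z_t - z^*_t\|$ and then unroll it into the advertised ISS bound. Three earlier facts do the heavy lifting: the algorithm~\eqref{eq:fullAlg} and the fixed-point characterization of $z^*_t$ in Proposition~\ref{prop:fixPoint} project onto the \emph{same} set $\mathcal{D}$ (this is exactly what the state augmentation buys us); the shift matrix $P = I_N\otimes\hat P$ is orthogonal, so $\|Pv\| = \|v\|$; and $P$ commutes with the Hessian, $P\nabla^2\eta = \nabla^2\eta P$, by Proposition~\ref{prop:hessDiffBound} (this is where $\bar\zeta^i=\zeta^i$ is used). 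First I would write $z_{t+1} = \Pi_{\mathcal{D}}(P\psi_t)$ and $z^*_{t+1} = \Pi_{\mathcal{D}}(\psi^*_{t+1})$, and invoke non-expansiveness of $\Pi_{\mathcal{D}}$ to obtain $e_{t+1} \leq \|P\psi_t - \psi^*_{t+1}\|$. Inserting $P\psi^*_t$ and applying the triangle inequality splits the right side into a \emph{contraction} term $\|P(\psi_t - \psi^*_t)\|$ and a \emph{drift} term $\|P\psi^*_t - \psi^*_{t+1}\|$.

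For the contraction term, orthogonality of $P$ removes $P$, and the affine gradient form $\nabla\eta(z) = (\nabla^2\eta)z - u_t$ from Proposition~\ref{prop:hessStruc} gives $\psi_t - \psi^*_t = (I - \alpha\nabla^2\eta)(z_t - z^*_t)$ once the common problem data cancels, so this term is at most $\rho\,e_t$ with $\rho := \|I - \alpha\nabla^2\eta\|$. Using the structure in Proposition~\ref{prop:hessStruc} I would bound $\lambda_{\max}(\nabla^2\eta) \leq N + \zeta^{\max}$: the coupling block $(\mathbf{1}_N\mathbf{1}_N^T)\otimes I_{N_p+1}$ contributes at most $N$ and the block-diagonal part contributes at most $\zeta^{\max}$. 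The step-size condition $\alpha < 1/(\zeta^{\max}+N)$ then forces $\alpha\lambda_{\max}(\nabla^2\eta) < 1$, so that $\rho = 1 - \alpha\lambda_{\min}(\nabla^2\eta) \in (0,1)$ (positivity of $\lambda_{\min}$ following from $\zeta^i>0$).

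Next I would bound the drift term. Expanding $\psi^*_t = (I - \alpha\nabla^2\eta)z^*_t + \alpha u_t$ and pushing $P$ through $I-\alpha\nabla^2\eta$ by the commutativity of Proposition~\ref{prop:hessDiffBound} yields $P\psi^*_t - \psi^*_{t+1} = (I - \alpha\nabla^2\eta)(Pz^*_t - z^*_{t+1}) + \alpha(Pu_t - u^*_{t+1})$. Since $\|I-\alpha\nabla^2\eta\| = \rho \le 1$, the first piece is controlled by Lemma~\ref{lem:invModel} together with assumption \textbf{A1}, giving $\|Pz^*_t - z^*_{t+1}\| \le N\bar g^*_{t+1}/\lambda_{\min}(\nabla^2\eta) \le N\bar g/\lambda_{\min}(\nabla^2\eta)$; the second piece is bounded by $\alpha\Delta$ via assumption \textbf{A2} with $\ell = 1$. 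Collecting the two pieces shows the drift is at most $\alpha\bar u$, with $\bar u$ exactly the quantity defined just before the theorem.

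Combining the two bounds gives the scalar recursion $e_{t+1} \leq \rho\,e_t + \alpha\bar u$. Iterating from $z_0$ produces $e_t \leq \rho^t e_0 + \alpha\bar u\sum_{k=0}^{t-1}\rho^k \le \rho^t e_0 + \alpha\bar u/(1-\rho)$, and since $1-\rho = \alpha\lambda_{\min}(\nabla^2\eta)$ the perturbation term collapses to $\bar u/\lambda_{\min}(\nabla^2\eta)$. This is precisely the claimed estimate with $\Omega(r,t) = \rho^t r \in \mathcal{KL}$ and $\Gamma(r) = r/\lambda_{\min}(\nabla^2\eta) \in \mathcal{K}$. I expect the main obstacle to be the drift term: one must match the shifted realized input $Pu_t$ against the optimal-problem input $u^*_{t+1}$ so that assumption \textbf{A2} applies at the correct index, and invoke the commutativity of Proposition~\ref{prop:hessDiffBound} on the right factor before taking norms. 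This bookkeeping is exactly the step where the horizon-shift $P$ and the state augmentation must interact cleanly, and it is why both $\bar\zeta^i = \zeta^i$ (for commutativity) and the identity of the projection sets (for non-expansiveness) are indispensable.
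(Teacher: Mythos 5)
Your proof is correct and follows essentially the same route as the paper's: non-expansiveness over the common set $\mathcal{D}$ via Proposition~\ref{prop:fixPoint}, the affine gradient of Proposition~\ref{prop:hessStruc}, commutativity of $P$ with $\nabla^2\eta$ from Proposition~\ref{prop:hessDiffBound} (enabled by $\bar{\zeta}^i=\zeta^i$), Lemma~\ref{lem:invModel} with \textbf{A1} for the optimizer drift, \textbf{A2} for the input mismatch, and a geometric unrolling of the resulting contraction. The only cosmetic differences are that you bound $\lambda_{\text{max}}(\nabla^2\eta)\leq N+\zeta^{\text{max}}$ by subadditivity of the largest eigenvalue over the two Hessian summands where the paper invokes the Gershgorin circle theorem, and your indexing invokes \textbf{A2} only with $\ell=1$, which is in fact a slightly cleaner bookkeeping than the paper's unrolled sum.
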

\begin{proof}
	See appendix.
\end{proof}
\ifx 0 
\begin{theorem}[GAS] \label{thm:GAS}
	In addition to the requirements of Theorem~\ref{thm:thmPrimal}, if
	\begin{align} \nonumber
		\lim_{t\rightarrow\infty}\bar{u}^*_t = 0, \ \text{and} \ \lim_{t\rightarrow\infty}u_t = u_{\infty},
	\end{align}
	then for all $z_0 \in \mathbb{R}^{(N_p+1)N}$
	\begin{align} \nonumber
		\lim_{t\rightarrow\infty}\|z_t-z^*_\infty\| = 0,
	\end{align}
	where $z^*_\infty$ is the optimal solution of~\eqref{prob:predMemResAlloc} corresponding to the input data $u_{\infty}$.
\end{theorem}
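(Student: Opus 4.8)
The plan is to reduce the statement to the standard fact that an input-to-state stable system whose disturbance decays to zero is asymptotically stable, combined with a continuity (``stability of the quadratic program'') argument for the moving target $z^*_t$. Write $a_t \eqdef \|z_t - z^*_t\|$. I would establish two facts separately and combine them through the triangle inequality $\|z_t - z^*_\infty\| \le a_t + \|z^*_t - z^*_\infty\|$: first, that $z^*_t \to z^*_\infty$, and second, that $a_t \to 0$.

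For the convergence of the target, since $u_t \to u_\infty$ by hypothesis, the associated optimal data satisfy $u^*_t \to u^*_\infty$ as well (the optimal previous consumption is a continuous function of the reference entries, which converge). Applying Lemma~\ref{lem:stabQuad2} in the limit gives $\|z^*_t - z^*_\infty\| \le N\|u^*_t - u^*_\infty\|/\lambda_{\text{min}}(\nabla^2\eta) \to 0$, where $z^*_\infty$ is the optimizer of~\eqref{prob:predMemResAlloc} under $u_\infty$. I would also record here the structural fact that, with constant limiting data, the per-load blocks of $z^*_\infty$ are constant vectors (the loads predict a steady consumption), so that $P z^*_\infty = z^*_\infty$; this is used below.

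For the vanishing of $a_t$, I would re-run the one-step estimate behind Theorem~\ref{thm:thmPrimal} but retain the \emph{time-varying} disturbance rather than its uniform bound. Using Proposition~\ref{prop:fixPoint}, the non-expansiveness of $\Pi_{\mathcal D}$ and of the permutation $P$, the affine form of $\nabla\eta$ from Proposition~\ref{prop:hessStruc}, and the commutation $P\nabla^2\eta = \nabla^2\eta P$ from Proposition~\ref{prop:hessDiffBound}, one obtains
\begin{align} \nonumber
a_{t+1} &\le \rho\, a_t + d_t, \quad \rho \eqdef \|I - \alpha\nabla^2\eta\| = 1 - \alpha\lambda_{\text{min}}(\nabla^2\eta) \in (0,1), \\ \nonumber
d_t &\le \rho\,\|P z^*_t - z^*_{t+1}\| + \alpha\,\|P u_t - u_{t+1}\|,
\end{align}
where the step-size condition $\alpha \in (0, 1/(\zeta^{\text{max}} + N))$ guarantees $\rho < 1$ because $\lambda_{\text{max}}(\nabla^2\eta) \le N + \zeta^{\text{max}}$. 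The first term of $d_t$ tends to zero by the previous step: $z^*_t \to z^*_\infty$ together with $P z^*_\infty = z^*_\infty$ force $\|P z^*_t - z^*_{t+1}\| \to 0$. I would then invoke the elementary discrete lemma: if $a_{t+1} \le \rho a_t + d_t$ with $\rho \in (0,1)$ and $d_t \to 0$, then $a_t \to 0$ (unroll to $a_t \le \rho^t a_0 + \sum_{k<t}\rho^{t-1-k} d_k$ and split the convolution at a large index). Combined with $z^*_t \to z^*_\infty$ this would give the claim.

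The main obstacle is the second term of $d_t$, namely $\alpha\|P u_t - u_{t+1}\|$. Under $u_t \to u_\infty$ it converges to $\alpha\|P u_\infty - u_\infty\|$, and because $\hat P$ is a \emph{cyclic shift} while the per-load block $u^i_\infty = [\,d^i_\infty + s_\infty,\ s_\infty,\dots,s_\infty\,]^T$ is not shift-invariant (its leading entry $d^i_\infty + s_\infty$ differs from the tracking entries $s_\infty$ whenever the steady consumption $d^i_\infty \neq 0$), this limit is generically nonzero, of size $O(\alpha\,|d^i_\infty|)$ per load. Hence the hypotheses $\bar u^*_t \to 0$ and $u_t \to u_\infty$ as stated do not by themselves drive $d_t$ — and therefore $a_t$ — to zero; they leave an $O(\alpha)$ steady-state bias stemming from the horizon-shift warm start, consistent with the nonvanishing residual $\Gamma(\bar u)$ of Theorem~\ref{thm:thmPrimal}. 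To close the argument I would either (a) strengthen the hypothesis to $P u_\infty = u_\infty$ (i.e.\ $d^i_\infty = 0$, a zero steady-state load), (b) replace $z^*_\infty$ in the conclusion by the fixed point of the shifted map $z \mapsto \Pi_{\mathcal D}\big((I-\alpha\nabla^2\eta)z + \alpha P u_\infty\big)$, for which the residual vanishes by construction and the contraction argument goes through verbatim, or (c) exploit active-constraint structure at $z^*_\infty$ under which $\Pi_{\mathcal D}$ absorbs the perturbation $\alpha(Pu_\infty - u_\infty)$. Option (b) is the most self-contained, since it identifies precisely the point to which the algorithm converges.
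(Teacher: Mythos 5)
Your attempt follows the same skeleton as the paper's own appendix proof: split $\|z_t-z^*_\infty\|\leq\|z_t-z^*_t\|+\|z^*_t-z^*_\infty\|$, bound the second term via Lemma~\ref{lem:stabQuad2}, and handle the first term via the ISS machinery of Theorem~\ref{thm:thmPrimal}. But your execution is more careful than the paper's, and your central finding --- that the stated hypotheses do not make the ISS disturbance vanish --- is correct. The paper's proof simply asserts ``the limit evaluates to zero based on the assumptions \dots and the properties of $\mathcal{KL}$ functions,'' which does not withstand scrutiny: $\Gamma(\bar u)$ is built from the \emph{time-uniform} constants $\bar g$ and $\Delta$ of \textbf{A1}--\textbf{A2} and cannot tend to zero; one must keep the time-varying disturbance inside the convolution $\sum_{\ell} M^{t-\ell}(\alpha)\,d_\ell$ and use the fading-memory lemma, exactly as you propose. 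And the disturbance then indeed fails to vanish under the stated hypotheses: along the paper's route through Lemma~\ref{lem:invModel}, $\bar g^*_t=\bar u^*_{t,t-1}+2\tilde u^*_t$, and the hypothesis $\bar u^*_t\rightarrow 0$ kills only the increment term, while $\tilde u^*_t=\|u^*_{t-1}-u^{*,0}_t\|\rightarrow\|u^*_\infty-u^{*,0}_\infty\|$, nonzero unless $z^*_\infty=0$; likewise $\|Pu_{t-1}-u^*_t\|\rightarrow\|Pu_\infty-u^*_\infty\|$, generically nonzero because $\hat P$ cyclically shifts the non-shift-invariant blocks of $u_\infty$ --- your $O(\alpha|d^i_\infty|)$ residual. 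This is presumably why both the theorem and its proof are disabled in the source (neither appears in the compiled paper) and why the conclusion defers ``asymptotic results for constrained time varying optimization'' to future work: you did not miss the proof, you found the gap in it.

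One step of your own argument is wrong, however, and it matters for locating the residual. You claim the blocks of $z^*_\infty$ are constant, hence $Pz^*_\infty=z^*_\infty$ and the first term of $d_t$ vanishes. This is false in general: the memory slot of each block solves a different stationarity condition from the future slots. With $\bar\zeta^i=\zeta^i$ and inactive constraints, the memory entry $m^i$ satisfies $\zeta^i(m^i-d^i_\infty)+\big(\sum_j m^j-s_\infty\big)=0$, while each future entry $d^i_\bullet$ satisfies $\zeta^i d^i_\bullet+\big(\sum_j d^j_\bullet-s_\infty\big)=0$; hence $m^i=d^i_\bullet$ for all $i$ precisely when $d^i_\infty=0$ for all $i$ --- the same condition as your fix (a). So \emph{both} terms of $d_t$ carry the shift-induced bias, and they stand or fall together. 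Your fix (b) is the right self-contained repair, with one correction: the limiting map must be $z\mapsto\Pi_{\mathcal{D}}\big(P(I-\alpha\nabla^2\eta)z+\alpha Pu_\infty\big)$ --- the shift acts on the state term as well (equivalently $(I-\alpha\nabla^2\eta)Pz$ by Proposition~\ref{prop:hessDiffBound}). Since $P$ is orthogonal, this composite map is a $\rho$-contraction with $\rho=\|I-\alpha\nabla^2\eta\|<1$, so it has a unique fixed point $\hat z$, and non-expansiveness plus your discrete lemma give $\|z_{t+1}-\hat z\|\leq\rho\|z_t-\hat z\|+\alpha\|u_t-u_\infty\|\rightarrow 0$. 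That establishes convergence of the algorithm to $\hat z$; the conclusion as stated, with limit $z^*_\infty$, holds only under the additional hypothesis of your fix (a).
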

\begin{proof}
	See appendix.
\end{proof} 
\fi
In Theorem~\ref{thm:thmPrimal} we have developed conditions on the stepsize in terms of the readily available problem data that will give a stability result for time varying reference signals. 

%In Theorem~\ref{thm:GAS} we show that if the time varying nature eventually ``vanishes'' an asymptotic result is obtained, under the same stepsize conditions as Theorem~\ref{thm:thmPrimal}.
%Here we have identified the ``input'' to the system as the difference of the signal $S_t$ and the state as the difference from the current iterate to the optimal point. Clearly, if if the input is zero, then the limit of $\left|\left|D_t-D^*\right|\right|^2$ is zero as $\Omega(\cdot,\cdot)$ is, by definition, decreasing in its second argument. So we only provide a proof for the ISS result, as the 0-GAS result follows trivially. 

\ifx 0
\begin{comment}
We see that the bound on $\alpha$ is not effected by allowing the individual agents to make predictions. In some sense, this means that the stability of the system is not effected by individual agents predictions. Rather, stability is dominated by any coupling terms that may appear in the objective. These coupling terms will form outer products in the Hessian (as observed in Proposition~\ref{prop:hessStruc}), and require that the step size be bounded by roughly one over the number of agents that contribute to the coupling term.

However, the predictions help give a higher rate of convergence.
\end{comment}
\fi
\ifx 0
\subsection{Practical Convergence Estimate Rate}
The quantity $M=\left|\left|I - \alpha I\otimes Z-\alpha(I \otimes\mathbf{1}^T)\right|\right|$ plays the defining role in how fast the algorithm converges. This is inevitably related to the singular values of the Hermitian matrix $\mathcal{M} = I - \alpha I\otimes Z-\alpha(I \otimes\mathbf{1}^T)$. Since $\mathcal{M}$ is Hermitian, we have the \emph{Weyl inequality} for the matrix $\mathcal{M}$,
\begin{align} \nonumber
	\sum_{i=1}^{N}\sigma_i(\mathcal{M}) &\leq \sum_{i=1}^{N}\sigma_i(I-\alpha I\otimes Z) + \sum_{i=1}^{N}\sigma_i(-\alpha(I \otimes\mathbf{1}^T)) \\
	&=\sum_{i=1}^{N}(1-\alpha\zeta^i) - \sum_{i=1}^{N}\alpha N,
\end{align}
where $\sigma_i(A)$ is the $i^{th}$ singular value of the matrix $A$. 
\begin{align}
	M = \left|\sigma_{\text{max}}(\mathcal{M})\right| \leq \left|\sum_{i=1}^{N}\big(1-\alpha\zeta^i - \alpha N\big)\right|.
\end{align}
Further suppose that $\alpha = 1/N$ so that,
\begin{align}
	M \leq \left|\frac{\sum_{i=1}^{N}\zeta^i}{N}\right|
\end{align}
\fi

\ifx 0
\subsection{High gain stability margin}
As a remark, we observe that the bound on alpha given is tight if $\zeta^i = \zeta^0$ for all $i \in \{1,\dots,N\}$. That is, in the homogeneous weighting scenario, no other higher bound exists for the given algorithm. The reason for this is a stronger form of the Gershgorin Circle Theorem, the maximum eigenvalue can only be on the edge of a Gershgorin disk if and only if the edge of all disks coincide. However, this will \emph{not} be satisfied with heterogeneous weights. so that the bound obtained on $\alpha$ from the circle theorem is strict.

While this facet seems purely mathematical, this result is actually quite practical. If the central authority has inaccurate knowledge of $N$ (the number of loads), we have baked in 2 factors of safety: (i) heterogeneity and (ii) sufficiency in the bounds on $\alpha$ based on the circle theorem. Recall, the stability results of Theorem~\ref{thm:thmPrimal} require $\left|\lambda_{\text{max}}\big(I-\alpha\nabla^2\eta\big)\right| < 1$, and we have elected $\alpha$ so that 
\begin{align}
	\alpha < \frac{1}{\zeta^{\text{max}}+N} <  \frac{1}{\lambda_{\text{max}}\big(\nabla^2\eta\big)} < \frac{2}{\lambda_{\text{max}}\big(\nabla^2\eta\big)} = \alpha_{\text{max}},
\end{align}
where $\alpha_{\text{max}}$ is the maximum allowable $\alpha$. 
This motivates a natural definition for the stability margin of the system.
\begin{definition}
	The high gain stability margin $\rho\in (0,\frac{1}{2}]$ is the value such that the following occurs,
	\begin{align} \nonumber
	 \frac{2}{\lambda_{\text{max}}\big(\nabla^2\eta\big)} - \frac{1}{\zeta^{\text{max}}+\rho N} = 0.
	\end{align}
\end{definition} 
Interestingly, this means that the value $N$ can be put to $N/2 + \epsilon$ for all $\epsilon > 0$ and the algorithm is still stable in the sense of Theorem~\ref{thm:thmPrimal}. Meaning, whatever individual is responsible for knowing $N$ in order to implement the algorithm, this individual does not need a good estimate of $N$ to preserve stability. 
\fi

\section{Numerical Examples} \label{sec:numExp}
Here we offer numerical examples to validate the result from Theorem~\ref{thm:thmPrimal}. This involves simulating the algorithm~\eqref{eq:distPrimAlg} on various types of data. We provide two scenarios for this: Scenario 1 (S1) a step reference that makes problem~\eqref{prob:compProb2} not feasible so to illustrate the integrator windup of the dual ascent method and Scenario 2 (S2) our proposed method tracking Bonneville Power Administrations (BPA) balancing reserves deployed (BRD) signal to illustrate the effectiveness of our algorithm tracking a time varying signal. 

In both scenarios: (i) each load is given a set of parameter values obtained by a linear spacing between the maximum and minimum values found (along with the other relevant simulation parameters) in Table~\ref{tab:simParam} and (ii) the sampling time is $T_s = 5$ minutes.
%In any case three important conceptual metrics are: (i) How well can the loads track the given reference signal (ii) How well can the loads satisfy their QoS while tracking this signal and (iii) is the consumers privacy protected?

%The point of these experiments will be that our proposed method is the only method that can achieve simultaneously the above three things. The comparison scenario 1 and 2 will fail to satisfy (ii) above, as by design (not our design!) they cannot account for the loads QoS. Additionally, comparison scenario 2 either fails (i) or (iii); it requires knowledge of \emph{every} loads QoS to ensure that the problem is feasible or integrator windup will occur and cause poor tracking performance.

%In all of the scenarios, data for the reference signal is obtained from Bonneville Power Administration (BPA) a balancing authority in the pacific northwest. The other relevant simulation parameters are given in Table~\ref{tab:simParam}.

\begin{table}[t]
	\centering
	\caption{Simulation Parameters}
	\label{tab:simParam}
	\begin{tabular}{|| l c c|| c c c||}
		Par. & Unit & value & Par. & Unit & value \\ 
		N & hundred & 1 & $\alpha$ & N/A & $\frac{0.99}{\zeta^{\text{max}}+N}$  \\ 
		$\zeta^{\text{min}},\zeta^{\text{max}}$ & N/A & 0.1, 4 & 
		$e^{\text{min}}_L,e^{\text{max}}_H$ & kWh & 0, 4 \\ 
		$d^{\text{min}}_L,d^{\text{max}}_H$ & kW & 0, 10 & $r^{\text{min}}_L,r^{\text{max}}_H$ & kW & -0.50, 0.50 \\
		%$\eta$ & $N/A$ & 2.5 \\
		%$T_a$ & $^{\circ}C$ & 30 \\
		%$\theta_{set}$ & $^{\circ}C$ & 21 \\
		%$\Delta$ & $^{\circ}C$ & 2 \\
		%$T_s$ & Mins. & 2 \\
	\end{tabular}
\end{table}

\subsection{Scenario 1: Integrator Windup of dual ascent}
The first example we illustrate is the ``integrator windup'' behavior that the dual algorithm suffers when problem~\eqref{prob:compProb2} is not feasible, as described in Section~\ref{sec:dualMethod}. The result of this is shown in Figure~\ref{fig:ingWindDual}. When the resource allocation problem~\eqref{prob:compProb2} is not feasible, the dual variable update~\eqref{eq:unstableInt} will continue to integrate non-zero area. It then takes dual ascent time to reach zero steady state error once feasibility is regained. It is worth noting that the two regions of integrated area in Figure~\ref{fig:ingWindDual} are equivalent.  

For comparison we also utilize our proposed algorithm with solely the magnitude constraints~\eqref{eq:powerCon} and $N_p = 0$, which does not suffer from integrator windup. 

%We also provide results for a time varying reference, shown in Figure~\ref{fig:ingWindTV}. Observing that at hours 5, 10, and 22 we see the Dual method has trouble tracking the reference as the dual variable was allowed to grow while the capacity was violated.

\begin{figure}[h]
	\centering
	\includegraphics[width=1\columnwidth]{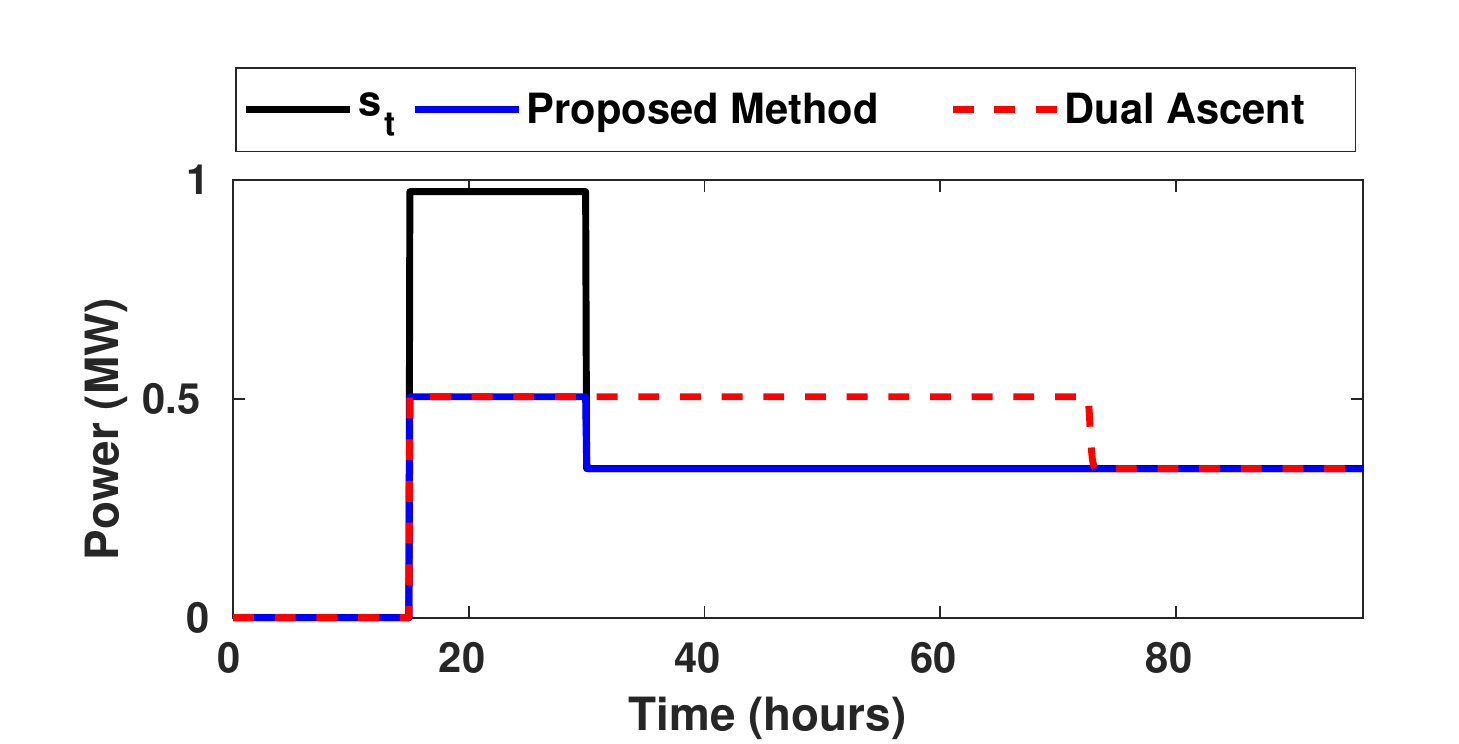}
	\caption{Integrator Windup of dual ascent with step response reference.}
	\label{fig:ingWindDual}
\end{figure}
\ifx 0
\begin{figure}[h]
	\centering
	\includegraphics[width=1\columnwidth]{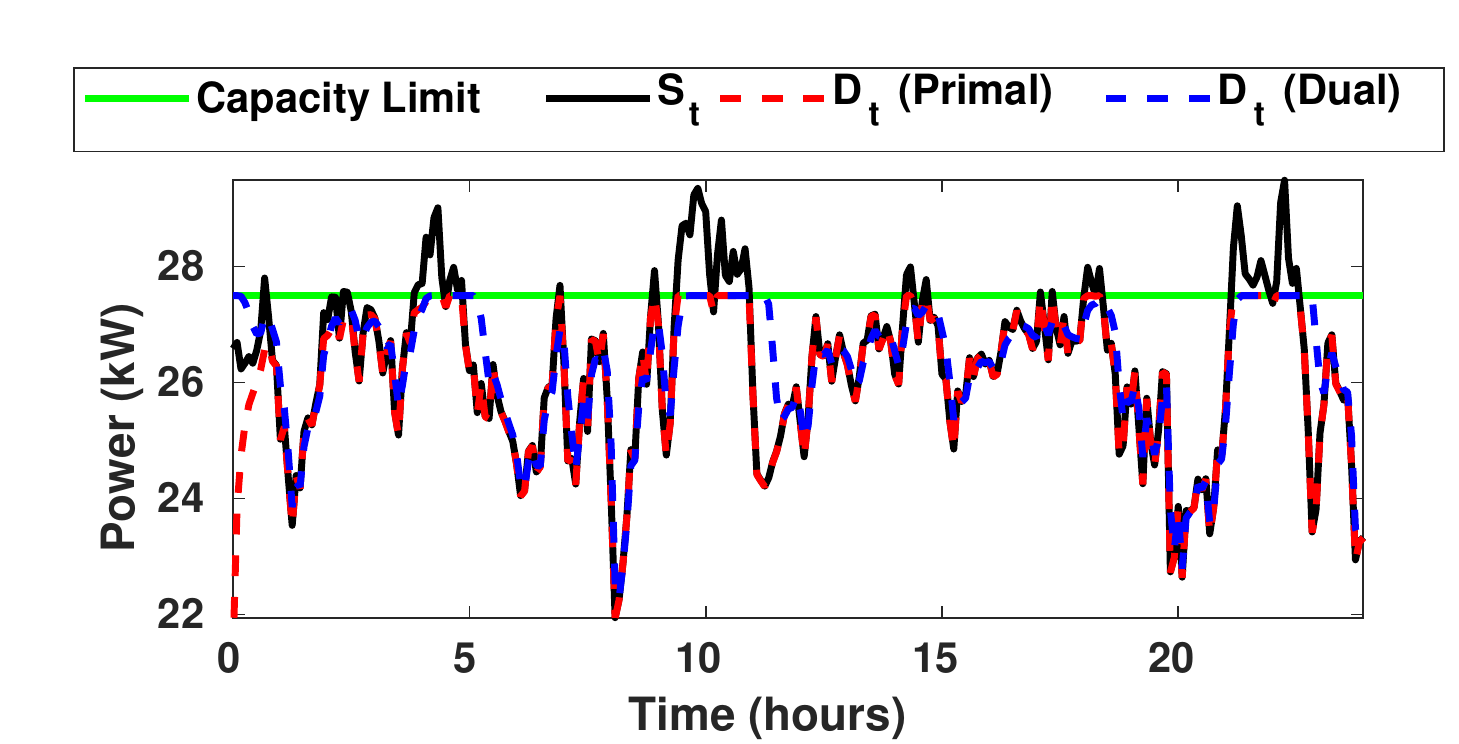}
	\caption{Integrator Windup of Dual Method with time varying reference.}
	\label{fig:ingWindTV}
\end{figure}
\fi

\ifx 0
\subsection{Robustness to Weight Heterogeneity}
We illustrate here the robustness of the methods to heterogeneity in the weights $\zeta^i$. We find that when the tracking a step reference the dual method is more robust to weight design. However, when tracking a time varying reference the Primal method is more robust to weight heterogeneity. The results for the time varying and step reference signals are shown in Figure~\ref{fig:timeRefHetWeight} and Figure~\ref{fig:stepRefHetWeight}, respectively.

As the dual method acts as an integrator, the results for the step reference are not too surprising. Since some of the weights are higher than the weight on reference tracking for the primal method, the result is not too surprising for the step reference for the primal method. It should be noted that the primal method has reached the optimal solution, however the optimal solution is not perfect reference tracking. We see that the primal method significantly out performs the dual method with the time varying reference. 

\subsection{General Remarks}
We have compared the primal and the dual algorithm for two different cases. From the results we can conclude that the dual algorithm is preferred for loads tracking step reference signals. Whereas the primal algorithm is preferred for loads tracking time varying reference signals. Furthermore, irrespective of the reference signal the primal method is more robust when it comes to reference signals that violate the capacity of the collection. The dual method exhibits integrator windup when the reference signal violates the capacity of the collection, whereas the primal method does not.  
\fi
\ifx 0
\begin{figure}[h]
	\centering
	\includegraphics[width=1\columnwidth]{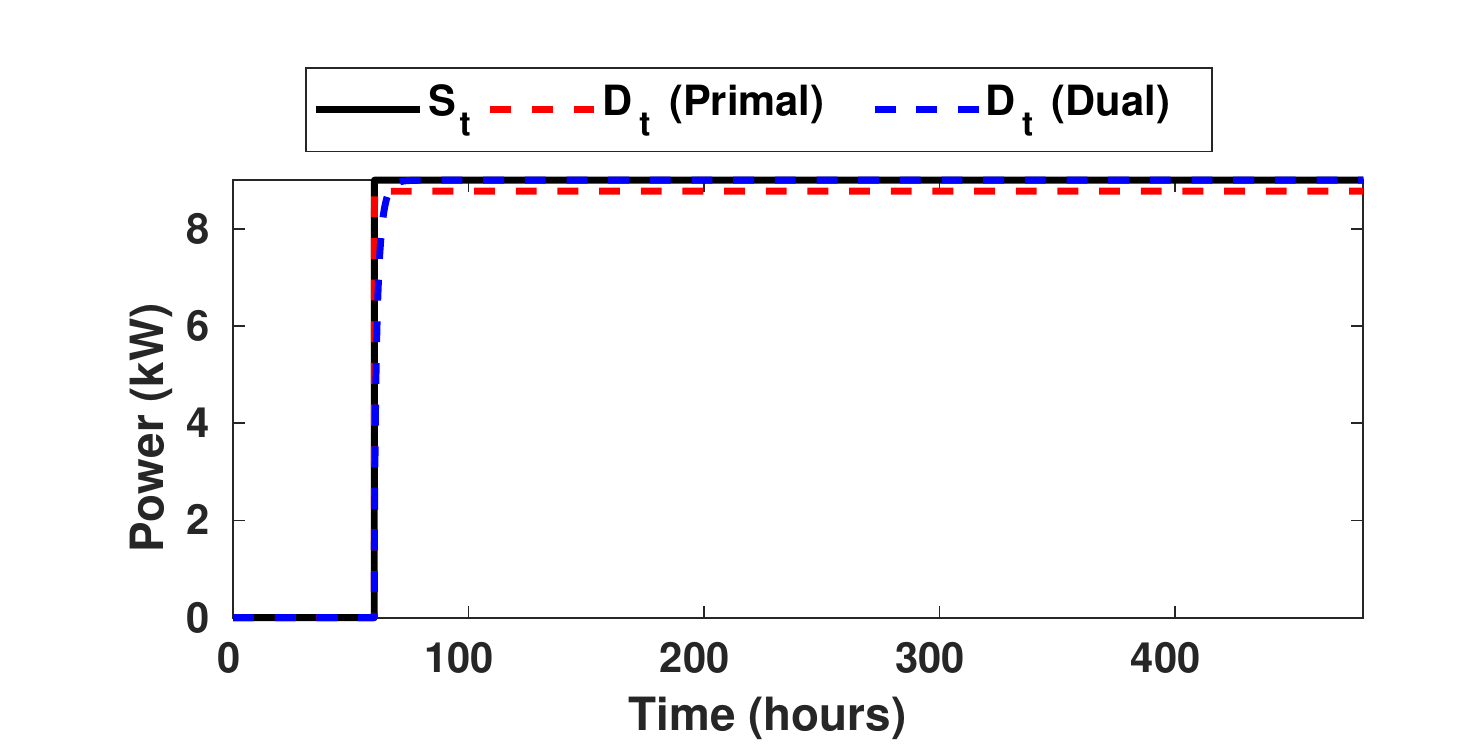}
	\caption{Tracking the step reference with Heterogeneous Weights.}
	\label{fig:stepRefHetWeight}
\end{figure}
\begin{figure}[h]
	\centering
	\includegraphics[width=1\columnwidth]{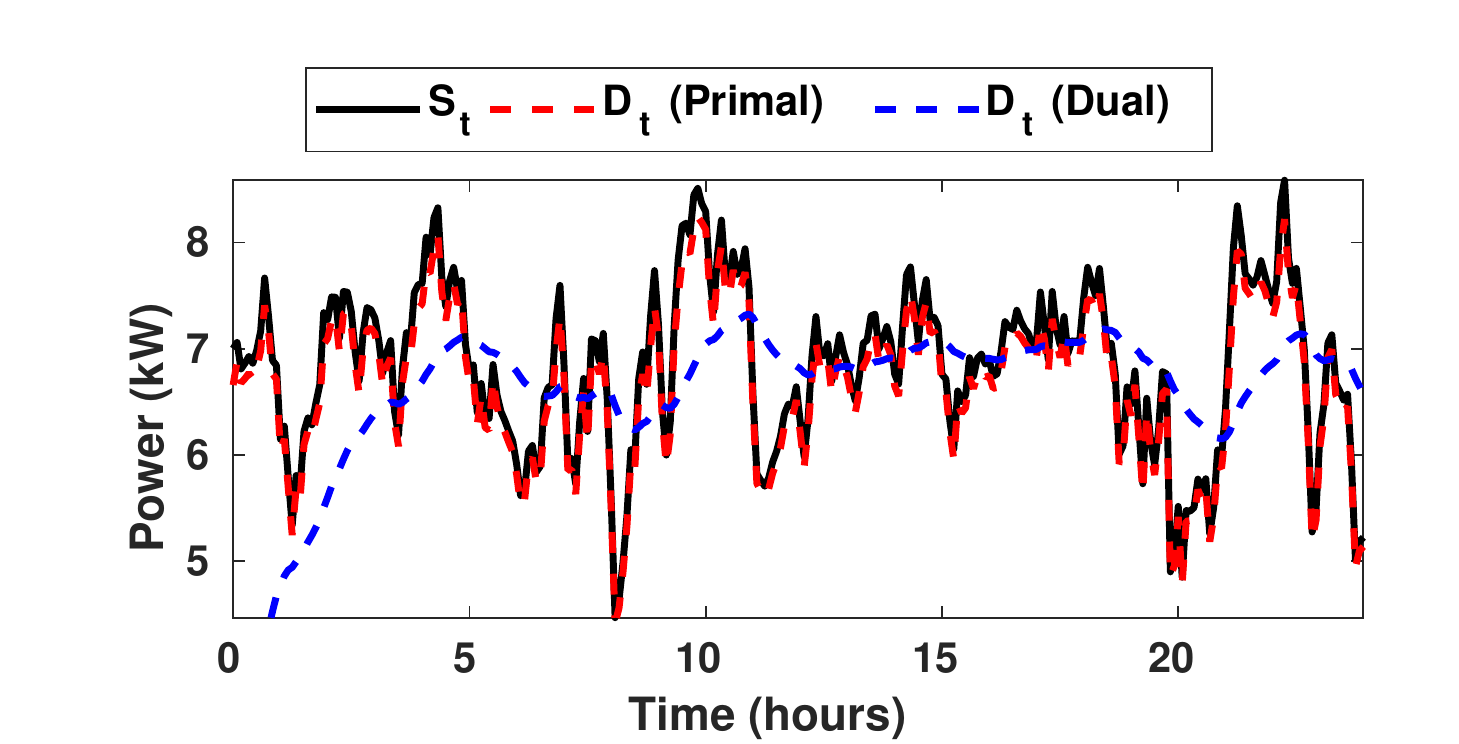}
	\caption{Tracking the time varying reference with Heterogeneous Weights.}
	\label{fig:timeRefHetWeight}
\end{figure}
\fi

\subsection{Scenario 2: Tracking BPA's BRD}
With our proposed method, we track a time-varying reference with a prediction horizon of $N_p = 5$; see Figure~\ref{fig:refTrackPropMethod}. Since the data obtained from BPA is on the order of GW, we scale the reference down to satisfy the magnitude constraint. However, this is \emph{not} required for the success of the algorithm, only to aid in exposition of the results. 

The 1-norm tracking error of the signal in Figure~\ref{fig:refTrackPropMethod} is $16.3 \%$, and can be attributed to 2 factors: (i) the reference is only guaranteed to satisfy the magnitude constraint~\eqref{eq:powerCon} so it may not be feasible for the other constraints and (ii) the algorithm only guarantees ISS and not asymptotic tracking. However, from experience we believe (i) to be the contributing factor. Other numerical experiments conducted suggest that it is possible to make the error quite small by increasing $N_p$ if the constraints are all feasible.

\begin{figure}[h]
	\centering
	\includegraphics[width=1\columnwidth]{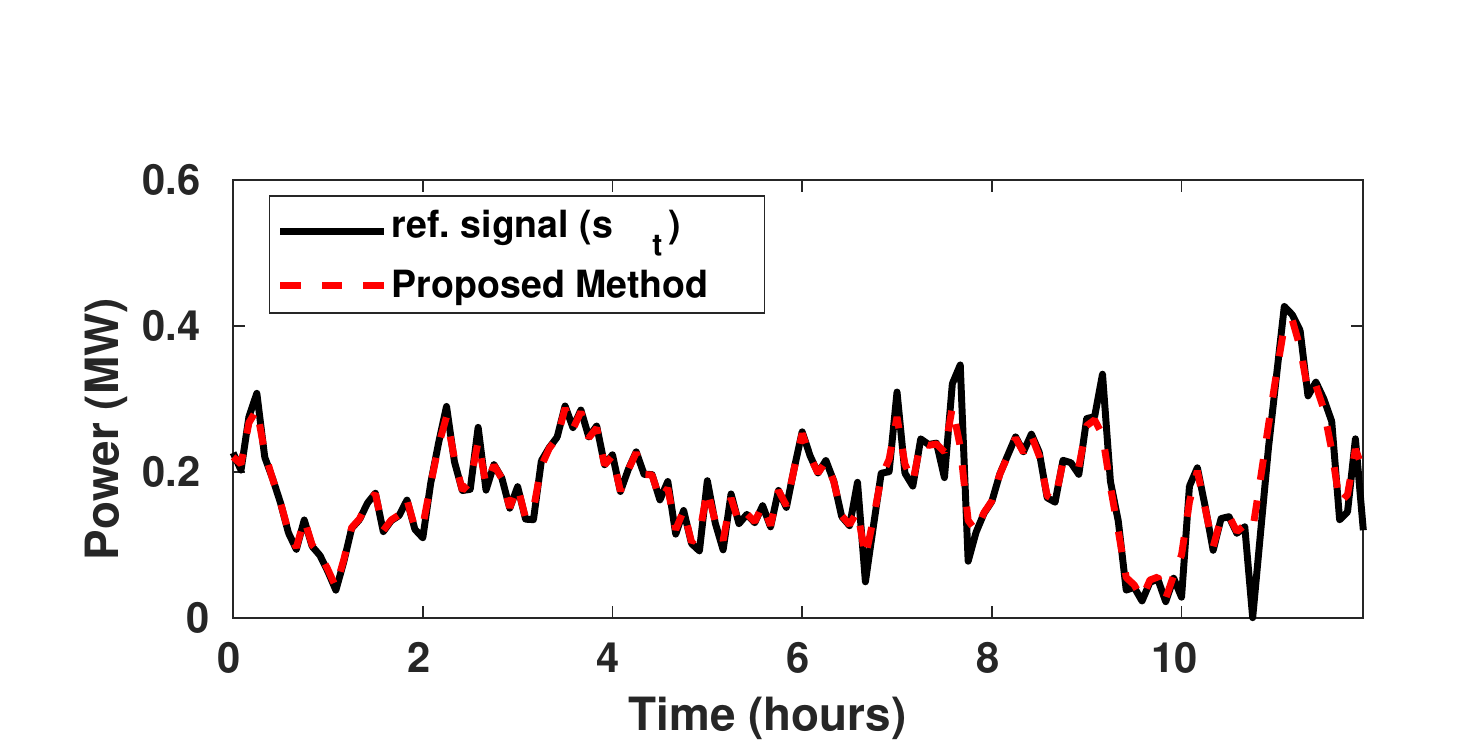}
	\caption{Tracking the time varying reference with the proposed method.}
	\label{fig:refTrackPropMethod}
\end{figure}

\section{Conclusion} \label{sec:conc}
We propose a real time optimization algorithm with distributed computation and hierarchical communication structure for the resource allocation of flexible loads in the smart grid. Our algorithm has two key innovations: (i) the utilization of predictions and (ii) a state augmentation technique to handle dynamic constraints.

Future work includes: (i) analyzing further the effects of the state augmentation technique, similar to the penalty method technique applied in~\cite{CherCortes:2018TAC} and (ii) the development of asymptotic results for constrained time varying optimization.    

\ifshowArxivAlt
\bibliographystyle{IEEEtran}
\bibliography{/home/austin/allBibFiles/Barooah,/home/austin/allBibFiles/optimization,/home/austin/allBibFiles/ControlTheory,/home/austin/allBibFiles/basics,/home/austin/allBibFiles/distributed_control,/home/austin/allBibFiles/grid}

\begin{thebibliography}{10}
	\providecommand{\url}[1]{#1}
	\csname url@samestyle\endcsname
	\providecommand{\newblock}{\relax}
	\providecommand{\bibinfo}[2]{#2}
	\providecommand{\BIBentrySTDinterwordspacing}{\spaceskip=0pt\relax}
	\providecommand{\BIBentryALTinterwordstretchfactor}{4}
	\providecommand{\BIBentryALTinterwordspacing}{\spaceskip=\fontdimen2\font plus
		\BIBentryALTinterwordstretchfactor\fontdimen3\font minus
		\fontdimen4\font\relax}
	\providecommand{\BIBforeignlanguage}[2]{{%
			\expandafter\ifx\csname l@#1\endcsname\relax
			\typeout{** WARNING: IEEEtran.bst: No hyphenation pattern has been}%
			\typeout{** loaded for the language `#1'. Using the pattern for}%
			\typeout{** the default language instead.}%
			\else
			\language=\csname l@#1\endcsname
			\fi
			#2}}
	\providecommand{\BIBdecl}{\relax}
	\BIBdecl
	
	\bibitem{bar:2019}
	P.~Barooah, \emph{Smart Grid Control: An Overview and Research
		Opportunities}.\hskip 1em plus 0.5em minus 0.4em\relax Springer Verlag, 2019,
	ch. {Virtual energy storage from flexible loads: distributed control with QoS
		constraints}, pp. 99--115.
	
	\bibitem{cammardella2018energy}
	N.~J. {Cammardella}, R.~W. {Moye}, Y.~{Chen}, and S.~P. {Meyn}, ``An energy
	storage cost comparison: {L}i-ion batteries vs {D}istributed load control,''
	in \emph{2018 Clemson University Power Systems Conference (PSC)}, Sep. 2018,
	pp. 1--6.
	
	\bibitem{CoffmanVESBuildSys:2018}
	A.~Coffman, A.~Bu\v{s}i\'{c}, and P.~Barooah, ``Virtual energy storage from
	{TCLs} using {QoS} preserving local randomized control,'' in \emph{5th {ACM}
		International Conference on Systems for Built Environments (BuildSys)},
	November 2018, p.~10.
	
	\bibitem{liu2019trajectory}
	M.~Liu, S.~Peeters, D.~S. Callaway, and B.~J. Claessens, ``Trajectory tracking
	with an aggregation of domestic hot water heaters: Combining model-based and
	model-free control in a commercial deployment,'' \emph{IEEE Transactions on
		Smart Grid}, 2019.
	
	\bibitem{mathias2016smart}
	J.~Mathias, R.~Kaddah, A.~Buic, and S.~Meyn, ``Smart fridge/dumb grid? demand
	dispatch for the power grid of 2020,'' in \emph{2016 49th Hawaii
		International Conference on System Sciences (HICSS)}.\hskip 1em plus 0.5em
	minus 0.4em\relax IEEE, 2016, pp. 2498--2507.
	
	\bibitem{haokowlinbarmey:2013}
	H.~Hao, A.~Kowli, Y.~Lin, P.~Barooah, and S.~Meyn, ``Ancillary service for the
	grid via control of commercial building {HVAC} systems,'' in \emph{American
		Control Conference}, June 2013, pp. 467--472.
	
	\bibitem{AghajFarm:2019}
	A.~Aghajanzadeh and P.~Therkelsen, ``Agricultural demand response for
	decarbonizing the electricity grid,'' \emph{Journal of Cleaner Production},
	vol. 220, pp. 827 -- 835, 2019.
	
	\bibitem{chenDistributedIMA:2017}
	Y.~Chen, M.~U. Hashmi, J.~Mathias, A.~Bu\v{s}i\'{c}, and S.~Meyn, ``Distributed
	control design for balancing the grid using flexible loads,'' in \emph{{IMA}
		Volume on the Control of Energy Markets and Grids}, 2017, pp. 1--26.
	
	\bibitem{cammerdellaCDC:2018}
	N.~{Cammardella}, J.~{Mathias}, M.~{Kiener}, A.~{Bu\v{s}i\'{c}}, and S.~{Meyn},
	``Balancing {C}alifornia's grid without batteries,'' in \emph{2018 IEEE
		Conference on Decision and Control (CDC)}, Dec 2018, pp. 7314--7321.
	
	\bibitem{TAC_Sun:2017}
	C.~{Sun}, M.~{Ye}, and G.~{Hu}, ``Distributed time-varying quadratic
	optimization for multiple agents under undirected graphs,'' \emph{IEEE
		Transactions on Automatic Control}, vol.~62, no.~7, pp. 3687--3694, July
	2017.
	
	\bibitem{simonetto2016class}
	A.~Simonetto, A.~Mokhtari, A.~Koppel, G.~Leus, and A.~Ribeiro, ``A class of
	prediction-correction methods for time-varying convex optimization,''
	\emph{IEEE Transactions on Signal Processing}, vol.~64, no.~17, pp.
	4576--4591, 2016.
	
	\bibitem{BrooksDecentralizedTSG:2019}
	J.~Brooks and P.~Barooah, ``Coordination of loads for ancillary services with
	{Fourier domain consumer QoS} constraints,'' \emph{IEEE Transactions on Smart
		Grid}, vol.~10, no.~6, pp. 6148--6155, 2019.
	
	\bibitem{brobar:ACC:2016}
	------, ``Consumer-aware load control to provide contingency reserves using
	frequency measurements and inter-load communication,'' in \emph{{A}merican
		Control Conference}, July 2016, pp. 5008 -- 5013.
	
	\bibitem{Hug_distOpt:2015TSG}
	G.~{Hug}, S.~{Kar}, and C.~{Wu}, ``Consensus + innovations approach for
	distributed multiagent coordination in a microgrid,'' \emph{IEEE Transactions
		on Smart Grid}, vol.~6, no.~4, pp. 1893--1903, July 2015.
	
	\bibitem{zhatoplilow:2014}
	C.~Zhao, U.~Topcu, N.~Li, and S.~Low, ``Design and stability of load-side
	primary frequency control in power systems,'' \emph{{IEEE Transactions on
			Automatic Control}}, vol.~59, no.~5, pp. 1177--1189, May 2014.
	
	\bibitem{CherCortes:2018TAC}
	A.~{Cherukuri} and J.~{Cort\'{e}s}, ``Distributed coordination of {DER}s with
	storage for dynamic economic dispatch,'' \emph{IEEE Transactions on Automatic
		Control}, vol.~63, no.~3, pp. 835--842, March 2018.
	
	\bibitem{baiSunFengHu:CDC2018}
	L.~{Bai}, C.~{Sun}, Z.~{Feng}, and G.~{Hu}, ``Distributed continuous-time
	resource allocation with time-varying resources under quadratic cost
	functions,'' in \emph{2018 IEEE Conference on Decision and Control (CDC)},
	Dec 2018, pp. 823--828.
	
	\bibitem{zhao2013optimal}
	C.~Zhao, U.~Topcu, and S.~H. Low, ``Optimal load control via frequency
	measurement and neighborhood area communication,'' \emph{Power Systems, IEEE
		Transactions on}, pp. 3576--3587, 2013.
	
	\bibitem{doan2017distributed}
	T.~T. Doan and C.~L. Beck, ``Distributed lagrangian methods for network
	resource allocation,'' in \emph{2017 IEEE Conference on Control Technology
		and Applications (CCTA)}.\hskip 1em plus 0.5em minus 0.4em\relax IEEE, 2017,
	pp. 650--655.
	
	\bibitem{coffman2019aggregate:arxivACC}
	A.~Coffman, N.~Cammardella, P.~Barooah, and S.~Meyn, ``Aggregate capacity of
	{TCL}s with cycling constraints,'' \emph{arXiv preprint arXiv:1909.11497},
	2019.
	
	\bibitem{CoffmanCharacterizingArxiv:2020}
	A.~R. Coffman, Z.~Guo, and P.~Barooah, ``Characterizing capacity of flexible
	loads for providing grid support,'' \emph{arXiv preprint arXiv:2005.01591},
	2020.
	
	\bibitem{hao_aggregate:2015}
	H.~Hao, B.~M. Sanandaji, K.~Poolla, and T.~L. Vincent, ``Aggregate flexibility
	of thermostatically controlled loads,'' \emph{{IEEE} Transactions on Power
		Systems}, vol.~30, no.~1, pp. 189--198, Jan 2015.
	
	\bibitem{goldstein1964}
	\BIBentryALTinterwordspacing
	A.~A. Goldstein, ``Convex programming in hilbert space,'' \emph{Bull. Amer.
		Math. Soc.}, vol.~70, no.~5, pp. 709--710, 09 1964. [Online]. Available:
	\url{https://projecteuclid.org:443/euclid.bams/1183526263}
	\BIBentrySTDinterwordspacing
	
	\bibitem{Popkov2005}
	A.~Y. Popkov, ``Gradient methods for nonstationary unconstrained optimization
	problems,'' \emph{Automation and Remote Control}, vol.~66, no.~6, pp.
	883--891, Jun 2005.
	
	\bibitem{zhatoplow:2012}
	C.~Zhao, U.~Topcu, and S.~Low, ``Frequency-based load control in power
	systems,'' in \emph{American Control Conference}, 2012, pp. 4423--4430.
	
	\bibitem{bertsekas1989parallel}
	D.~P. Bertsekas, \emph{Parallel and distributed computation: numerical
		methods}.\hskip 1em plus 0.5em minus 0.4em\relax Prentice hall Englewood
	Cliffs, NJ, 1989, vol.~23.
	
	\bibitem{wang2009fast}
	Y.~Wang and S.~Boyd, ``Fast model predictive control using online
	optimization,'' \emph{IEEE Transactions on control systems technology},
	vol.~18, no.~2, pp. 267--278, 2009.
	
	\bibitem{horn2012matrix}
	R.~A. Horn and C.~R. Johnson, \emph{Matrix analysis}.\hskip 1em plus 0.5em
	minus 0.4em\relax Cambridge university press, 2012.
	
	\bibitem{Daniel:1973}
	J.~W. Daniel, ``Stability of the solution of definite quadratic programs,''
	\emph{Mathematical Programming}, vol.~5, no.~1, pp. 41--53, Dec 1973.
	
	\bibitem{khalil2002nonlinear}
	H.~K. Khalil, \emph{Nonlinear systems, 3rd ed.}\hskip 1em plus 0.5em minus
	0.4em\relax Upper Saddle River, NJ, USA: Prentice Hall, 1996.
	
	\bibitem{varga2010gervsgorin}
	R.~S. Varga, \emph{Ger{\v{s}}gorin and his circles}.\hskip 1em plus 0.5em minus
	0.4em\relax Springer Science \& Business Media, 2010, vol.~36.
	
\end{thebibliography}
%\bibliography{litReview,../../bibFiles/dicelab_bib/optimization,../../bibFiles/dicelab_bib/Barooah,../../bibFiles/dicelab_bib/ControlTheory,../../bibFiles/dicelab_bib/Basics,../../bibFiles/dicelab_bib/distributed_control,../../bibFiles/dicelab_bib/grid}
\fi

\ifshowArxiv

\fi

\section*{Appendix}
\ifx 0
\subsection{Proof of Lemma~\ref{lem:invModel}}
%\begin{proof}
	The inventory system is developed under the assumption of perfect predictions, so that $z^*_{t-1}[j]=z^*_t[j-1]$ for $i \in \{2,\dots,N_p+1\}$. For $i=1$, we must add and subtract the correct values to make the inventory recursion correct.
	
	To prove the bound, consider $\|Pz^*_{t-1}-z^*_t\|$:
	\begin{align} \nonumber
		&\leq \|B\|\|P\|\|z^*_{t-1}-z^*_t\| \\ \nonumber
		&= \|z^*_{t-1}-z^*_t\| = \|\Pi_{\mathcal{D}}(\psi^*_{t-1})-\Pi_{\mathcal{D}}(\psi^*_{t})\| \\ \nonumber
		&\leq M(\alpha)\|z^*_{t-1}-z^*_t\| + \alpha\|u^*_t-u^*_{t-1}\|, \\ \nonumber
		&\leq \frac{\alpha \|u^*_t-u^*_{t-1}\|}{1-M(\alpha)} =\frac{\alpha \bar{u}^*_t}{1-M(\alpha)}
	\end{align}
%\end{proof}
\fi
\subsection{Proof of Lemma~\ref{lem:invModel}}
Proceeding directly, by the triangle inequality we have
\begin{align} \nonumber
	\frac{1}{N}\|Pz^*_{t-1} - z^*_t\| &\leq \frac{1}{N}\|z^*_{t-1}-z^*_{t}\| + \frac{1}{N}\|Pz^*_{t-1} - z^*_{t-1}\|, \\
	&\leq \frac{\bar{u}^*_{t,t-1}}{\lambda_{\text{min}}(\nabla^2
		\eta)} + \frac{2}{N}\|z^*_{t-1}\|.
\end{align}
We can bound $\|z^*_{t-1}\|$ by using Lemma~\ref{lem:stabQuad2} where $z^*_t$ will be zero when $u^*_t = u^{*,0}_t$, yielding
\begin{align} \nonumber
	\frac{1}{N}\|Pz^*_{t-1} - z^*_t\| &\leq \frac{\bar{u}^*_{t,t-1}}{\lambda_{\text{min}}(\nabla^2
		\eta)} + \frac{2\tilde{u}^{*}_{t}}{\lambda_{\text{min}}(\nabla^2
		\eta)} = \frac{\bar{g}^*_t}{\lambda_{\text{min}}(\nabla^2
		\eta)},
\end{align}
which is the desired result.
\subsection{Proof of Theorem~\ref{thm:thmPrimal}}
%\begin{proof}
	We start with the developed vectorized notation, 
	\begin{align} \nonumber
	\|z_t-z^*_t\|&= \|\Pi_{\mathcal{D}}(P\psi_{t-1})-\Pi_{\mathcal{D}}(\psi^*_{t})\| \leq\|P\psi_{t-1}-\psi^*_{t}\|  \\ \nonumber
	&=\|Pz_{t-1} - z^*_t- \alpha\big(P\nabla\eta(z_{t-1})-\nabla\eta(
	z^*_{t})\big)\|, 
	\end{align}
	where the inequality is by the non-expansive property of the projection operator. Working with the gradient terms we have from Proposition~\ref{prop:hessStruc} that,
	\begin{align} \nonumber
	P(\nabla\eta(z_{t-1})) = P\bigg((\nabla^2\eta)z_{t-1} - u_{t-1}\bigg)
	\end{align} 
	so that $P\nabla\eta(z_{t-1})-\nabla\eta(z^*_{t}):$
	\begin{align} \nonumber
	&= \nabla^2\eta\bigg(Pz_{t-1} - z^*_{t}\bigg) + \bigg(u^*_t - Pu_{t-1} \bigg) \\ \nonumber &+ \bigg(P(\nabla^2\eta) - (\nabla^2\eta)P\bigg)z_{t-1}.  
	\end{align}
	Substituting this result into to the original quantity of interest and applying the triangle inequality, we have:
	\begin{align} \nonumber
	\|z_t-z^*_t\| &\leq M(\alpha)\|Pz_{t-1}-z^*_t\|
	+\alpha\|(Pu_{t-1}-u^*_t)\|\\ \nonumber&+\alpha\bigg\|\bigg(P(\nabla^2\eta) - (\nabla^2\eta)P\bigg)\bigg\|\|z_{t-1}\|, \\ \nonumber
	&\leq M(\alpha)\|Pz_{t-1}-z^*_t\|
	+\alpha\|(Pu_{t-1}-u^*_t)\|,
	\end{align} 
	where $M(\alpha) = \|I - \alpha\nabla^2\eta\|$. The third term above is eliminated from our choice of $\bar{\zeta}^i=\zeta^i$. As the Hessian is positive definite, it is possible to pick an $\alpha$ so that $M(\alpha)<1$. We take this fact for granted now, and later in the proof provide the bound found in the theorem. Now utilizing the triangle inequality we have that,
	\begin{align} \nonumber
	\|z_{t}-z^*_t\|&\leq M(\alpha)\bigg(\|z_{t-1}-z^*_{t-1}\|+\|Pz^*_{t-1}-z^*_t\|\bigg)
	\\ \nonumber
	&+\alpha\|(Pu_{t-1}-u^*_t)\|.
	\end{align} 
	Now applying the results of Lemma~\ref{lem:invModel} we  have,
	\begin{align} \nonumber
	\|z_{t}-z^*_t\|&\leq M(\alpha)\bigg(\|z_{t-1}-z^*_{t-1}\|+\frac{N\bar{g}^*_t}{\lambda_{\text{min}}(\nabla^2
		\eta)}\bigg)
	\\ \nonumber
	&+\alpha\|(Pu_{t-1}-u^*_t)\|.
	\end{align}
	We iterate this backwards a total of $t$ times to reach $t=0$, yielding:
	\begin{align} \nonumber
	\|z_{t}-z^*_t\| &\leq M^{t}(\alpha)\|z_{0}-z_0^*\|  \\ \nonumber
	&+\alpha\sum_{\ell=1}^{t}M^{t-\ell}(\alpha)\bigg(\|P u_{t-\ell}-u^*_t\| + \frac{N\bar{g}^*_t}{\alpha\lambda_{\text{min}}(\nabla^2
		\eta)}\bigg).
	\end{align}
	Now, from our assumptions we can bound the quantity in parentheses in the summation by $\bar{u}$ yielding,
	\begin{align} \nonumber
	\|z_t - z^*_t\| &\leq M^{t}(\alpha)\|z_{0}-z_0^*\| +\frac{\alpha\bar{u}}{1-M(\alpha)} , \\ \nonumber
	&\leq \Omega(\|z_{0}-z_0^*\|,t) + \Gamma(\bar{u}),
	\end{align}
	where, as desired, it can be easily verified that $\Omega \in \mathcal{KL}$ and $\Gamma \in \mathcal{K}$ as long as $M(\alpha) < 1$, which we ensure next.
	
	Now that the ISS result has been obtained, we show how the range on $\alpha$ is obtained to guarantee $M(\alpha)=\|I - \alpha\nabla^2\eta\| < 1$. By definition we have, $\|I - \alpha\nabla^2\eta\|$:
	\begin{align} \nonumber
	 &=\max\big\{\left|\lambda_{\text{min}}\big(I - \alpha\nabla^2\eta\big)\right|,\left|\lambda_{\text{max}}\big(I - \alpha\nabla^2\eta\big)\right|\big\},
	\end{align}
	since $I - \alpha\nabla^2\eta$ is symmetric and where $\lambda_{\text{max}}(A)$ and $\lambda_{\text{min}}(A)$ are the maximum and minimum eigenvalue of the matrix $A$, respectively. If we denote $\lambda_i(\nabla^2\eta)$ the $i^{th}$ eigenvalue of $\nabla^2\eta$, then $\lambda_i(I-\alpha\nabla^2\eta)$, the $i^{th}$ eigenvalue of $I-\alpha\nabla^2\eta$, is
	\begin{align} \nonumber
		\lambda_i(I-\alpha\nabla^2\eta) = 1 -\alpha\lambda_i(\nabla^2\eta),
	\end{align}
	which is obtained by considering the eigendecomposition of $\nabla^2\eta$.
	Thus, we seek to guarantee $M(\alpha) < 1$, and it is sufficient to require 
	\begin{align} \nonumber
	0 < \alpha\lambda_{\text{max}}\big(\nabla^2\eta\big) < 1,
	\end{align}
	which immediately leads to the lower bound $\alpha>0$, since the Hessian is positive definite. To obtain the upper bound we apply the Gershgorin circle theorem~\cite{varga2010gervsgorin}. This is readily applicable based on the structure of the Hessian found in Proposition~\ref{prop:hessStruc}. This yields the following sufficient lower and upper bound on $\alpha$ for $M(\alpha) < 1$, 
	\begin{align} \nonumber
	&\alpha \in \bigg(0, \frac{1}{\zeta^{\text{max}} + N}\bigg).
	\end{align} 
%	\end{proof}

\ifx 0
\subsection{Proof of Theorem~\ref{thm:GAS}}
To show the result, we will utilize the results of Theorem~\ref{thm:thmPrimal} and Lemma~\ref{lem:stabQuad2}. From the former we have that,
\begin{align} \nonumber
	0 \leq \|z_t-z^*_{t}\| \leq \Omega(\|z_{0}-z_0^*\|,t) + \Gamma(\bar{u}).
\end{align}
From the latter we have that,
\begin{align} \nonumber
	0 \leq \|z^*_{t}-z^*_{\infty}\| \leq \frac{N\|u^*_{t}-u^*_\infty\|}{\lambda_{\text{min}}(\nabla^2\eta)}.
\end{align}
Now combining, applying the triangle inequality, and taking the limit we obtain the desired result, 
\begin{align} \nonumber
	0 \leq \lim\limits_{t\rightarrow\infty}\|z_t-z^*_\infty\| \leq 0.
\end{align}
The limit evaluates to zero based on the assumptions given in Theorem~\ref{thm:GAS} and the properties of $\mathcal{KL}$ functions.
\fi
\end{document}